\newcommand{\din}{\Rrightarrow}
\newcommand{\dinat}{\mathbf{Dinat}}
\newcommand{\ctr}{\mathbf{ctr}}
\newcommand{\otv}{\otimes_\cV}
\newcommand{\xc}{\mathbb{X}\mathcal{C}}
\newcommand{\Fr}{z}
\newcommand{\Fb}{\gamma}
\newcommand{\coc}[1]{\Fun(#1^\op,\cV)}
\newcommand{\hc}[3]{\underset{#1}{#3\,\mathbb H\, #2}}
\newcommand{\itk}[1]{%
   \begin{tikzpicture}[baseline=(current bounding box.center),>={Triangle[open,width=0pt 20]},scale=0.5]
   #1
   \end{tikzpicture}
   }
\gdef\@fpheader{}
\title{Tube Category, Tensor Renormalization and Topological Holography}
\author{Tian Lan}
\affiliation{Department of Physics, The Chinese University of Hong Kong,\\ Shatin, New Territories, Hong Kong, China}
\emailAdd{tlan@cuhk.edu.hk}
\abstract{
Ocneanu's tube algebra provides a finite algorithm to compute the Drinfeld center of a fusion category. In this work we reveal the universal property underlying the tube algebra. Take a base category $\mathcal V$ which is strongly concrete, bicomplete, and closed symmetric monoidal. For physical applications one takes $\mathcal V=\mathbf{Vect}$ the category of vector spaces. Given a $\mathcal V$-enriched rigid monoidal category $\mathcal C$ (not necessarily finite or semisimple) we define the tube category $\mathbb X \mathcal C$ using coends valued in $\mathcal V$. Our main theorem established the relation between (the category of representations of) the tube category $\mathbb X \mathcal C$ and the Drinfeld center $Z(\mathcal C)$: $Z(\mathcal C)\hookrightarrow \mathrm{Fun}(\mathbb X \mathcal C^{\mathrm{op}},\mathcal V)\cong Z(\mathcal C\hookrightarrow\mathrm{Fun}(\mathcal C^{\mathrm{op}},\mathcal V))\hookrightarrow Z(\mathrm{Fun}(\mathcal C^{\mathrm{op}},\mathcal V))$. Physically, besides viewing the tube category as a version of TFT with domain being the tube, we emphasize the ``Wick-rotated'' perspective, that the morphisms in $\mathbb X \mathcal C$ are the local tensors of fixed-point matrix product operators which preserves the symmetry $\mathcal C$ in one spatial dimension. We provide a first-principle flavored construction, from microscopic quantum degrees of freedom and operators preserving the symmetry, to the macroscopic universal properties of the symmetry which form the Drinfeld center. Our work is thus a proof to the 1+1D topological holography in a very general setting.
}
\begin{document}
\maketitle
\flushbottom

\section{Introduction}

The notion of center plays a central role in the study of algebraic theories, and also finds intriguing applications in the physical study of topological holography~\cite{KW1405.5858,KWZ1502.01690,KZ1705.01087,CJK+1903.12334,KZ1905.04924,KZ1912.01760,KWZ2108.08835,XZ2205.09656,LY2208.01572,KWZ1702.00673,TW1912.02817,HC2310.16878,KZ2011.02859,KZ2107.03858,KZ2201.05726,JW1905.13279,JW1912.13492,KLW+2003.08898,KLW+2005.14178,JW2106.02069,ABE+2112.02092,BS2304.02660,BS2305.17159,CW2203.03596,CW2205.06244,LZ2305.12917,LYW2312.15958,MYLG2412.20546}. In this paper we are interested in the Drinfeld center $Z(\cC)$ of a monoidal category $\cC$, which computes the 2+1D dimensional bulk of, either a 1+1D boundary theory, or a 1+1D theory with symmetry. The Drinfeld center $Z(\cC)$ has various names in the physics literature in the context of topological holography: the background category in the enriched category approach~\cite{KZ1705.01087,CJK+1903.12334,KZ1905.04924,KZ1912.01760,XZ2205.09656,LY2208.01572,KWZ2108.08835}, topological skeleton~\cite{KZ2011.02859,KZ2107.03858,KZ2201.05726}, categorical symmetry~\cite{JW1905.13279,JW1912.13492,KLW+2003.08898,KLW+2005.14178,JW2106.02069}, symmetry TFT (topological field theory)~\cite{ABE+2112.02092,BS2304.02660,BS2305.17159}, symmetry topological order~\cite{CW2203.03596,CW2205.06244} and quantum currents~\cite{LZ2305.12917}.

It is a practically  important question how the Drinfeld center can be efficiently computed. To this end, Ocneanu~\cite{Ocn94} proposed a tube algebra approach. The original idea of Ocneanu was in the context of operator algebra. It was later generalized to arbitrary spherical fusion category $\cC$: Ref.~\cite{PSV1511.07329} proves that the category of representations of the tube algebra is equivalent to $Z(\cC)$. There are also many related studies based on the Levin-Wen string-net model~\cite{LW0404617,Kir1106.6033,KK1104.5047,LW1311.1784} or tensor networks~\cite{SWB+1409.2150,BMW+1511.08090,WBV1711.07982}. 

Refs.~\cite{HK1806.01800,Har1911.07271} introduced the notion of tube category $\cT\cC$, the horizontal categorification of the tube algebra. The category of representations of the tube algebra, is then replaced by $\Fun(\cT\cC^\op,\ve)$ which is still equivalent to $Z(\cC)$. This point of view makes the TFT nature of the tube algebra more transparent, by allowing the circle (boundary conditions of the tube) to be decorated by different objects of $\cC$.

However, the construction of tube algebra or tube category was still somehow \textit{ad hoc}. In this work, we try to reveal the universal property underlying the construction of tube category. To distinguish from the traditional definition $\cT\cC$ in the literature, as well as to emphasize the physical picture of local tensor, we denote our tube category by $\xc$. Instead of the \textit{ad hoc} graphical calculus manipulations on the tube, we instead define the morphisms of the tube category as the coend
\[ \xc(a,b):=\int^{x}\cC(ax^{RR},xb).\] The categorical structure then naturally follows from the universal property of coend.

Our treatment makes the generalization of tube category quite straightforward; we can consider the tube category of a monoidal category enriched over a more general base $\cV$, as long as the necessary universal properties still apply (for example, all the coend exists). We also do not need to assume that the category $\cC$ behaves well (for example, finite or semisimple). Our main theorem is the following:
\begin{theorem}
    Let $\cV$ be a strongly concrete, bicomplete, closed symmetric monoidal category and $\cC$ be a rigid monoidal category enriched over $\cV$. Construct the tube category $\xc$ as in Definition~\ref{def.xc}. As $\cV$-enriched categories,
    \[ Z(\cC)\hookrightarrow \Fun(\xc^\op,\cV)\cong Z(\cC\hookrightarrow\coc{\cC})\hookrightarrow Z(\Fun(\cC^\op,\cV)),\]
    where the monoidal structure of $\Fun(\cC^\op,\cV)$ is the Day convolution product~\cite{Day70}, and $Z(\cC\hookrightarrow\coc{\cC})$ denotes the relative center of the Yoneda embedding $\cC\hookrightarrow\coc{\cC}$. The above composition is moreover a braided fully faithful functor, and can be considered as the lift of the Yoneda embedding $\cC\hookrightarrow\coc{\cC}$ to the Drinfeld center.
\end{theorem}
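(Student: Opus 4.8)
The plan is to reduce the theorem to one central equivalence, $\Fun(\xc^\op,\cV)\simeq Z(\coc{\cC})$, together with a separately constructed braided embedding $Z(\cC)\hookrightarrow Z(\coc{\cC})$, and then compose. Throughout I take for granted the properties of $\xc$ fixed at Definition~\ref{def.xc} — that $\xc(a,b)=\int^{x}\cC(ax,xb)$ carries the associative unital composition forced by the coend, and that $f\mapsto(f\in\cC(a\one,\one b))$ followed by the coend injection gives a strong monoidal $\cV$-functor $\iota\colon\cC\to\xc$ — as well as the standard facts that Day convolution makes $\coc{\cC}$ a bicomplete closed monoidal category with $\ot_{\mathrm{Day}}$ cocontinuous in each variable, that the Yoneda embedding $\cC\to(\coc{\cC},\ot_{\mathrm{Day}})$ is strong monoidal and dense, and that consequently $Z(\coc{\cC})$ is bicomplete, the forgetful functor $U\colon Z(\coc{\cC})\to\coc{\cC}$ is cocontinuous and conservative, and $U$ has a left adjoint $L$; concreteness of $\cV$ is used to check coend identities on underlying sets.

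For the central equivalence I would apply the recognition theorem for presheaf categories: a cocomplete $\cV$-category $\cE$ is equivalent to $\Fun(\cA^\op,\cV)$ as soon as there is a fully faithful, dense $\cV$-functor $\cA\to\cE$ landing in tiny objects (those corepresenting cocontinuous functors). Take $\cE=Z(\coc{\cC})$ and let $\bar K\colon\xc\to Z(\coc{\cC})$ send $c$ to the free half-braided object $K(c):=L(\cC(-,c))$, with action on morphisms induced by functoriality of $L$ and the identification below. Here pivotality enters first: the representables are exactly the dualizable objects of $\coc{\cC}$ and are dense, so $L(\cC(-,c))$ is given by the usual coend-over-representables formula $K(c)\cong\int^{e\in\cC}\cC(-,{}^\vee e)\ot_{\mathrm{Day}}\cC(-,c)\ot_{\mathrm{Day}}\cC(-,e)\cong\int^{e}\cC(-,{}^\vee e\ot c\ot e)$, and then, using the adjunction $L\dashv U$, the co-Yoneda lemma, and the duality moves that need the pivotal structure, one computes $Z(\coc{\cC})(K(c),K(c'))\cong (UK(c'))(c)\cong\int^{e}\cC(e\ot c,\,c'\ot e)$, which is $\xc(c,c')$ up to the harmless duality/opposite reindexing that the pivotal structure trivializes; this is where the coend definition of $\xc$ gets matched to the center, and it also pins down $\bar K$ on morphisms. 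Tininess of $K(c)$ is then automatic, since $Z(\coc{\cC})(K(c),-)\cong(U-)(c)$ is a composite of cocontinuous functors; and density of $\bar K$ transfers from density of the Yoneda embedding of $\cC$ using cocontinuity of $L$ and conservativity of $U$. This gives $Z(\coc{\cC})\simeq\Fun(\xc^\op,\cV)$ via $(F,\gamma)\mapsto Z(\coc{\cC})(\bar K(-),(F,\gamma))$; a final check shows this equivalence carries the Day-convolution-induced monoidal product and the canonical braiding of $Z(\coc{\cC})$ to structures on $\Fun(\xc^\op,\cV)$, which is how the latter becomes braided monoidal.

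For the embedding $Z(\cC)\hookrightarrow Z(\coc{\cC})$: given $(c,\beta)\in Z(\cC)$, transport $\beta$ along the strong monoidal Yoneda embedding to a half-braiding $\tilde\beta$ on $\cC(-,c)$ — on a representable $\cC(-,a)$ it is post-composition with $\beta_a$, extended uniquely to all of $\coc{\cC}$ by density and cocontinuity of $\ot_{\mathrm{Day}}$, the hexagons for $\tilde\beta$ following from those for $\beta$. This is functorial in $(c,\beta)$ and fully faithful, since by the Yoneda lemma a morphism $(\cC(-,c),\tilde\beta)\to(\cC(-,c'),\tilde\beta')$ is a morphism $c\to c'$ in $\cC$ and the half-braiding-compatibility conditions on the two sides coincide; it commutes with the forgetful functors to $\cC$ and $\coc{\cC}$ with the Yoneda embedding on the bottom; and it is braided because the braiding of a Drinfeld center is evaluation of the half-braiding, which is preserved. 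Concatenating with the central equivalence yields the braided fully faithful functor $Z(\cC)\hookrightarrow\Fun(\xc^\op,\cV)\cong Z(\coc{\cC})$ lifting the Yoneda embedding $\cC\hookrightarrow\coc{\cC}$.

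The main obstacle is the $\Hom$-identification $Z(\coc{\cC})(K(c),K(c'))\cong\xc(c,c')$ and, underlying it, showing that the morphisms of $\xc$ not in the image of $\iota$ assemble — after the pivotal duality manoeuvres — into precisely a half-braiding, with all its coherences corresponding to the dinaturality and composition relations of the coend; once this is in hand, checking tininess, density, and then the braided monoidal upgrade is comparatively routine. A secondary point to get right is size: one must arrange (essential smallness of $\cC$, bicompleteness and concreteness of $\cV$) that Day convolution, the presheaf categories, the left adjoint $L$, and all the coends genuinely exist.
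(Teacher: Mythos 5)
Your route is genuinely different from the paper's. The paper argues in two explicit halves: first, for each $(z,\gamma)\in Z(\cC)$ it builds the presheaf $Y_{(z,\gamma)}=\cC(-,z)$ on $\xc$ by writing down a dinatural transformation out of $\cC(ax,xb)$ using $\gamma$ and the pivotal structure, and gets fully faithfulness from the Yoneda lemma applied to $Y_{(z,\gamma)}L$; second, it proves $\coc{\xc}\cong Z(\coc\cC)$ by two mutually inverse hands-on constructions, reading a Day-convolution half-braiding off the action of $F\in\coc{\xc}$ on certain morphisms of $\xc$, and conversely restricting a half-braiding to representables to lift $H\in\coc{\cC}$ to $\hat H\in\coc{\xc}$. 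You instead invoke the free/forgetful adjunction $L\dashv U$ for $Z(\coc\cC)$ and the recognition theorem for presheaf categories (tiny, dense, fully faithful image). Where it works, your approach buys a cleaner conceptual statement --- $\xc$ becomes the Kleisli-type full subcategory of free central objects on representables --- and makes tininess and the final equivalence formal rather than computational.

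However, two genuine gaps remain. First, everything you defer to ``standard facts'' about $L$ is exactly where the paper's work lives: for a pivotal but not necessarily finite or semisimple $\cC$ you must actually prove that $U:Z(\coc\cC)\to\coc\cC$ creates colimits and has a left adjoint, that $UL(\cC(-,c'))\cong\int^{e\in\cC}\cC(-,e\ot c'\ot e^R)$ with the coend over representables only (this needs density of representables, cocontinuity of Day convolution in each variable, and rigidity of $\cC$), and that the resulting lax half-braidings are invertible --- the lax-center-versus-center issue, which the paper settles via naturality, the hexagon, the unit condition and rigidity. None of this is shorter than the direct construction; it is the same content repackaged. Second, the identification $Z(\coc\cC)(K(c),K(c'))\cong\int^{e}\cC(ce,ec')$ is so far only an isomorphism of objects of $\cV$; to feed $\xc$ itself into the recognition theorem you must check that it intertwines the coend-defined composition $\ctr$ of Definition~\ref{def.xc} with composition in $Z(\coc\cC)$ and matches identities, otherwise you have only exhibited $\Fun(\cA^\op,\cV)$ for some category $\cA$ with the same hom-objects as $\xc$. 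Density of the $K(c)$ likewise needs an argument (strong generation by tiny objects via the canonical presentation of algebras, then Kelly's theorem), not merely conservativity of $U$. With these points supplied, your proof goes through and recovers the theorem, including the braided lift of the Yoneda embedding.
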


The physical picture also becomes clear after a universal perspective of the tube category is achieved. Consider a 1+1D lattice quantum system with some symmetry. Let $\cC$ be the category of conserved charges and intertwiners of the symmetry. Then, any symmetric operator on the 1+1D lattice can be represented by a matrix product operator (MPO) whose bonds are labeled by objects in $\cC$, and local tensors are morphisms in $\cC$. One can then consider the renormalization of such symmetric MPOs. If we require the renormalization to preserve only the symmetry, but not any specific dynamics, we argue that such renormalization is mathematically formulated by a cowedge under the functor $\cC(a-,-b)$ which picks out the local tensor with virtual bonds $a,b$; it generally loses some information about the dynamics. However, we believe that the coend, the universal or ``largest'' cowedge, preserves all the universal information about the quantum 1+1D system with symmetry. Indeed, we can prove mathematically that the representations of such constructed tube category form the Drinfeld center. This way, our results explained how the Drinfeld center arise as macroscopic (long wavelength or thermodynamic limit) properties of the symmetry, starting from the microscopic quantum degrees of freedom and operators preserving the symmetry. Therefore, we give a proof, in a very general setting, to the 1+1D topological holography between 1+1D quantum system with symmetries and 2+1D topological orders.
There are works~\cite{Jon2304.00068,JNPW2307.12552,Oga2308.08087} in the operator algebra framework, also of the similar flavor of ``first principles'' which constructs the Drinfeld center from microscopic data and proves the (topological) holography in different settings.

It is worth noting that our framework does not require the symmetry to be finite, invertible, or even anomaly-free. However, for the generalized cases, e.g., the continuous $U(1)$ symmetry, $\cC=\Rep\,U(1)$, the category of representations of the tube category $\coc{\xc}\cong Z(\cC\hookrightarrow\coc{\cC})$ is much larger than the case when $\cC$ is only a fusion category. The tube category approach fails to be an efficient, or even finite, algorithm to compute the center of large $\cC$. It is also not clear whether $Z(\coc{\cC})$ can still represent a topological system for large $\cC$.

This work is organized as follows: In Section~\ref{sec.pre} we introduce necessary mathematical preliminaries and set up our notations. In Section~\ref{sec.xc} we motivate the construction of the tube category $\xc$ from the tensor renormalization of 1+1D symmetric MPOs. In Section~\ref{sec.embed} we prove the embedding $Z(\cC)\hookrightarrow \coc{\xc}$. In Section~\ref{sec.equ} we prove the equivalence $\coc{\xc}\cong Z(\cC\hookrightarrow\coc{\cC})$.

\section{Preliminaries}\label{sec.pre}
% \textbf{Notations:}
For a category $\cC$, its hom set is denoted by $\cC(a,b)$. Composition of morphisms is denoted by either $fg$, or $f\circ g$ for emphasis. $\cC^\op$ denotes the opposite category of $\cC$, i.e., $\cC^\op(a,b)=\cC(b,a)$. $\Cat$ denotes the 2-category of categories, functors and natural transformations. $\Set$ denotes the category of sets and (ordinary) maps. We use $-$ or $?$ for an unspecified variable in a map or functor.
For a monoidal category, the tensor product is denoted by $\ot$ and the tensor unit is denoted by $\one$.
% In case of confusion, we add subscripts like $\ot_\cC$ and $\one_\cC$. 
For tensor product between objects,
% when it is the only possible operation, 
we often omit $\ot$ and denote by simply juxtaposition $x\ot y=xy$. The right dual of an object $x$ in a rigid monoidal category is denoted by $x^R$ while the left dual is denoted by $x^L$. Our convention for left/right dual follows that of left/right adjoint functors, i.e., when doing evaluations, the left/right dual appears on the left/right, $\ev_{x^L,x}: x^L\ot x\to \one$ or $\ev_{x,x^R}: x\ot x^R\to \one$, while for coevaluations, the left/right dual appears on the right/left, $\coev_{x^R,x}: \one \to x^R\ot x $ or $\coev_{x,x^L}: \one\to x\ot x^L$. The object indices of structure morphisms ($\id,\ev,\coev,\dots$) may be dropped if they are clear from the context.
$\cV$ is a fixed category and we consider $\cV$-enriched categories or simply $\cV$-categories (see e.g.~\cite{KYZZ2104.03121}, the textbook~\cite{Kel05}). For technical simplicity, we have some assumptions on $\cV$:
\begin{definition}[Cosmos] A category $\cV$ is called a (Bénabou) \emph{cosmos} if it is
\begin{enumerate}
    \item Symmetric monoidal. The tensor product of $\cV$ is denoted by $\otv$.
    \item Closed,  i.e., $\cV$ is self-enriched: there is an object $[a,b]\in \cV$ representing the functor $\cV(-\otv a,b)$, namely $\cV(-\otv a,b)\cong \cV(-,[a,b])$.
    \item Bicomplete (complete and cocomplete), so that all limits and colimits, in particular all ends and coends, exist in $\cV$.
\end{enumerate}
\end{definition}
\begin{remark}
    Strictly speaking, one should better specify the size of diagrams with respect to which $\cV$ is bicomplete. In this paper we mainly consider coends indexed by a given category $\cC$ (which describes the physical symmetry), and thus the size of $\cV$ can be chosen according to the size of $\cC$. For example, if one is interested in fusion category symmetry, where $\cC$ is a fusion category with finitely many (isomophism classes of) simple objects and finite dimensional hom spaces, it suffices to take $\cV=\vf$ the category of finite dimensional vector spaces, which admits all finite limits and colimits. If one considers a continuous symmetry group whose (isomorphism classes of) irreducible representations are infinitely many, $\cV$ needs to be enlarged to $\ve$, the category of (possibly infinite dimensional) vector spaces, in order to ensure the existence of coends.
\end{remark}
\begin{definition}[Strongly concrete] A cosmos $\cV$ is called \emph{strongly concrete} if the hom functor ${\cV(\one,-):\cV\to\Set}$ is faithful.
\end{definition}
\begin{remark}
    Unpacking the definition, we require that the morphism map 
     \begin{align*}
        \cV(a,b)&\to \Set(\cV(\one,a),\cV(\one,b))\\ f&\mapsto f\circ-
    \end{align*}
    is injective. $\cV(a,b)$ is a subset of $\Set(\cV(\one,a),\cV(\one,b))$. We will then consider a strongly concrete cosmos $\cV$ as a concrete category with $\cV(\one,x)$ being the underlying set of $x\in \cV$. A set $X$ in the essential image of $\cV(\one,-)$, i.e., a triple $$(X\in \Set,x\in \cV,X\cong\cV(\one,x))$$ is then a structured set underlying the object $x$ in $\cV$. By abuse of notation, we will simply say that (the set) $X$ is an object in $\cV$. Under the perspective that $\cV$ is a category of structured sets and maps preserving structures, notationally we do not distinguish an object $x$ in $\cV$  from the underlying structured set $\cV(\one,x)$, as if $x=\cV(\one,x)$.
    Together with the closed condition, we know $\cV(a,b)\cong \cV(\one,[a,b])= [a,b]$. In other words, we have natural isomorphisms: $\cV(x\otv a,b)\cong \cV(x,\cV(a,b)).$
\end{remark}

In this paper, the base category $\cV$ is always assumed to be a strongly concrete cosmos. It makes no harm for physics oriented reader to take $\cV=\ve$ the category of vector spaces (over complex numbers $\C$) and linear maps; the $\ve$-enriched categories are simply linear categories and all the prefixes $\cV$- can be replaced for linear. Other possible choices include $\cV=\Set$ for ordinary categories and $\cV=\mathbf{Ab}$ for additive categories. However, the strongly concrete condition excludes choices such as $\mathbf{sVec}$ or $\Rep G$. 

Over a strongly concrete cosmos $\cV$, enriched categories can be dealt with as if they are ordinary categories. To explain this idea more precisely, we need to introduce the notion of bi(multi)-$\cV$-map as the generalization of bi(multi)-linear map. Note that the hom functor $\cV(\one,-):\cV\to \Set$ is a lax symmetric monoidal functor with the lax monoidal structure being
\[\cV(\one,x)\xt \cV(\one,y) \xrightarrow{\otv} \cV(\one\otv\one,x\otv y)\cong \cV(\one,x\otv y);\]
Viewing objects in $\cV$ as structured sets and taking the Cartesian product of them, there is then a canonical ordinary map $\otv: x\xt y\to x\otv y$:
\[ x\xt y = \cV(\one,x)\xt \cV(\one,y) \xrightarrow{\otv}\cV(\one\otv\one,x\otv y)\cong \cV(\one,x\otv y)= x\otv y.\]
\begin{definition}[Bi-$\cV$-map]
For objects $x,y,a\in \cV$, an ordinary map $f:x\xt y \to a$ is called a \emph{bi-$\cV$-map} if it factors through $x\otv y$, i.e., there exists $\tilde f:x\otv y\to a$ in $\cV$ such that $f=\tilde f \circ \otv$ as ordinary maps.  One can similarly define \emph{multi-$\cV$-maps}. 
\end{definition}
\begin{remark}
    By the closed condition and the strongly concrete condition
    \begin{align*}
    \cV(x\otv y,a) &\cong \cV(x,[y,a])
    \\
    &\hookrightarrow \Set(\cV(\one,x),\cV(\one,[y,a]))
    \\&
    \cong \Set(\cV(\one,x),\cV(y,a))
    \\&
    \hookrightarrow\Set(\cV(\one,x),\Set(\cV(\one,y),\cV(\one,a)))
    \\&
    =\Set(x,\Set(y,a))\cong \Set(x\xt y,a),
    \end{align*}
    we know that a bi-$\cV$-map $f:x\xt y \to a$ is a map in $\cV$ in each variable, in the sense:
\begin{enumerate}
    \item For each $\chi\in x$, $ f(\chi,-):y\to a$ is a map in $\cV$ and for each $\eta\in y$, $f(-,\eta):x\to a$ is also a map in $\cV$;
    \item The maps $\chi\mapsto  f(\chi,-)$ and $\eta\mapsto f(-,\eta)$ are also maps in $\cV$.
\end{enumerate} 
The canonical map $\otv: x\xt y\to x\otv y$ is the image of $\id_{x\otv y}$ under the injective map ${\cV(x\otv y,a)\hookrightarrow \Set(x\xt y,a)}$ with $a=x\otv y$. The factorization $\tilde f$ of $f$ is unique by naturality in $a$ and injectivity. Note that the braiding $x\otv y\to y\otv x$ in $\cV$ is the factorization of the bi-$\cV$-map $x\xt y\to y\xt x\to y\otv x$, and thus must be unique and essentially exchanging arguments in the Cartesian product.
\end{remark}
\begin{definition} Over a strongly concrete cosmos $\cV$,
\begin{enumerate}
    \item A \emph{$\cV$-category} is an ordinary category whose hom sets are objects in $\cV$ and compositions are multi-$\cV$-maps.
    \item A \emph{$\cV$-functor} $F:\cC\to \cD$ between $\cV$-categories $\cC,\cD$ is an ordinary functor $F:\cC\to \cD$ such that the morphism maps $F^{a,b}:\cC(a,b)\to \cD(F(a),F(b))$ are all in $\cV$. \emph{$\cV$-natural transformations} between $\cV$-functors are just ordinary natural transformations. For $\cC, \cD$ two $\cV$-categories, $\Fun(\cC,\cD)$ denotes the full subcategory of $\Cat(\cC,\cD)$ consisting of $\cV$-functors, which is automatically a $\cV$-category.
    \item Given two $\cV$-categories $\cC$ and $\cD$, their tensor product is denoted by $\cC\otv\cD$ which is still a $\cV$-category, whose objects are $\Ob(\cC\otv\cD):=\Ob(\cC)\xt\Ob(\cD)$ and morphisms are $\cC\otv\cD((x,a),(y,b)):=\cC(x,y)\otv\cD(a,b)$. The composition is given by ${(g\otv s)(f\otv r)}:={(gf)\otv(sr)}$ and the identity is given by $\id_{(x,a)}:=\id_x\otv\id_a$.
    When $\cV=\Set$, $\cC\ot_{\Set} \cD$ is just the Cartesian product $\cC\xt\cD$. There is clearly a canonical ordinary functor $\cC\xt\cD\to \cC\otv\cD$ which is identity on objects and the canonical map $\otv: \cC(x,y)\xt\cD(a,b)\to\cC(x,y)\otv\cD(a,b)$ on morphisms.
    \item Given $\cV$-categories $\cC,\cD$ and $\cE$, an ordinary functor $F:\cC\xt \cD\to \cE$ is called a \emph{bi-$\cV$-functor} if it factors through $\cC\otv\cD$, equivalently, the morphism maps are bi-$\cV$-maps. One can similar define \emph{multi-$\cV$-functors}.
    \item A \emph{(braided) monoidal $\cV$-category} $\cC$ is a category $\cC$ which is both a (braided) monoidal category and a $\cV$-category such that the tensor product $\ot:\cC\xt\cC\to \cC$ is a bi-$\cV$-functor.
\end{enumerate} 

\end{definition}

\begin{remark}
Similarly, a bi-$\cV$-functor $ F:\cC\xt\cD\to \cE$ is a $\cV$-functor in each variable:
\begin{enumerate}
    \item For each $x\in \cC$, $F(x,-): \cD\to \cE$ is a $\cV$-functor;
    % \item For each $f:x\to y$ in $\cC$, $F(f,-):F(x,-)\Rightarrow F(y,-)$ is a $\cV$-natural transformation;
    \item The functor $x\mapsto F(x,-), f\mapsto F(f,-)$ is also a $\cV$-functor from $\cC$ to $\Fun(\cD,\cE)$.
\end{enumerate}
The factorization of $F$ is again unique. By a slight abuse of notation we use the same symbol for the corresponding $\cV$-functor $F:\cC\otv\cD\to\cE$. One may check that a natural transformation between $\cV$-functors $F,G:\cC\otv\cD\to \cE$ is the same as that between bi-$\cV$-functors $F,G:\cC\xt\cD\to\cE$.
\end{remark}

% \begin{definition}[Drinfeld center]
% Let $(\cC,\ot,\alpha)$ be a monoidal category where $\alpha$ denotes the associator. The \emph{Drinfeld center} $Z(\cC)$ is the braided monoidal category defined as follows:
% \begin{enumerate}
%    \item An object of $Z(\cC)$ is a pair $(z,\gamma_{z,-})$ where $z$ is an object in $\cC$ and $\gamma_{z,x}:zx\to  xz$ is a collection of isomorphisms natural in $x\in \cC$ satisfying the hexagon equation.
%     \begin{equation*}\label{Drinfeld2}
%          \begin{tikzcd}
%             & z\ot(x\ot y) 
%             \rar{\gamma_{z,x\ot y}}
%             & (x\ot y)\ot z 
%             \drar{\alpha_{x,y,z}}
%             &\\
%             (z\ot x)\ot y
%             \urar{\alpha_{z,x,y}}
%             \drar{\gamma_{z,x}\ot\id_y}
%             &&& x\ot (y\ot z)
%             \\ & (x\ot z)\ot y
%             \rar{\alpha_{x,z,y}}
%             & x\ot (z\ot y)
%             \urar{\id_x\ot\gamma_{z,y}}
%             &
%          \end{tikzcd}
%       \end{equation*}
%       \item A morphism $f$ from $(z,\gamma_{z,-})$ to $(z',\tau_{z',-})$ is a morphism $f:z\to z'$ in $\cC$ satisfying $\tau_{z',x}(f\ot\id_x)=(\id_x\ot f)\gamma_{z,x}$ for any $x\in\cC$.
%    \item The tensor product of $(z,\gamma_{z,-})$ and $(z',\tau_{z',-})$ is given by $(z\ot z',\mu_{z\ot z',-})$ where \begin{equation*} \mu_{z\ot z',x}=\alpha_{x,z,z'}(\gamma_{z,x}\ot\id_{z'})\alpha^{-1}_{z,x,z'}(\id_z\ot \tau_{z',x})\alpha_{z,z',x}. \end{equation*}
%    \item The braiding is $c_{(z,\gamma_{z,-}),(z',\tau_{z',-})}=\gamma_{z,z'}$.
%    \end{enumerate}
% \end{definition}
\begin{definition}
    [Relative center and Drinfeld center]
    Let $F:\cC\to\cD$ be a monoidal functor. The \emph{relative center} $Z(F)$ is the monoidal category defined as follows (structure morphisms are omitted for simplicity):
\begin{enumerate}
   \item An object of $Z(F)$ is a pair $(z,\gamma_{z,-})$ where $z$ is an object in $\cD$ and $\gamma_{z,x}:zF(x)\to  F(x)z$ is a collection of isomorphisms in $\cD$, natural in $x\in \cC$, satisfying the equation:
    \begin{align*}
    &\left(zF(x)F(y)\cong zF(xy)\xrightarrow{\gamma_{z,xy}}F(xy)z\cong F(x)F(y)z\right)\\
    &=\left( zF(x)F(y)\xrightarrow{\gamma_{z,x}\ot F(\id_y)}F(x)zF(y)\xrightarrow{F(\id_x)\ot \gamma_{z,y}} F(x)F(y)z\right).
    \end{align*}
    $\gamma_{z,x}$ is called the \emph{half-braiding}. The above equation is similar to half of the hexagon equations for braidings, hence the name. We still refer to the above as the \emph{hexagon equation} although its commutative diagram is not a hexagon.
    \item A morphism $f$ from $(z,\gamma_{z,-})$ to $(z',\tau_{z',-})$ is a morphism $f:z\to z'$ in $\cD$ which commutes with half-braidings: $\tau_{z',x}(f\ot F(\id_x))=(F(\id_x)\ot f)\gamma_{z,x}$ for any $x\in\cC$.
   \item The tensor product of $(z,\gamma_{z,-})$ and $(z',\tau_{z',-})$ is given by $(zz',\mu_{z z',-})$ where \begin{equation*} \mu_{z z',x}=\left(zz'F(x)\xrightarrow{\id_z\ot \tau_{z',x}}zF(x)z'\xrightarrow{\gamma_{z,x}\ot \id_{z'}}F(x)zz'\right). 
   \end{equation*}
\end{enumerate}
    The \emph{Drinfeld center} $Z(\cC)$ is the relative center of the identity functor $Z(\id_\cC)$, which is moreover braided with braiding $c_{(z,\gamma_{z,-}),(z',\tau_{z',-})}=\gamma_{z,z'}$. 
\end{definition}
\begin{remark}
When $F:\cC\to\cD$ is a $\cV$-functor, $Z(F)((z,\gamma_{z,-}),(z',\tau_{z',-}))$ can be defined as an equalizer in $\cV$. Thus, the above definition generalizes straightforwardly for monoidal $\cV$-functors, and $Z(F)$ is automatically a monoidal $\cV$-category. 
\end{remark}
\medskip

We recall the most essential notions about (co)end; for a more comprehensive exposition, see e.g. \cite{Lor1501.02503}. 
\begin{definition}
[Dinatural transformation] Let $\cC,\cD$ be two categories and $F,G:\cC^\op \xt \cC \to \cD$ two functors. A \emph{dinatural transformation} $\alpha:F\din G$ consists of a family of morphisms $\{\alpha_x: F(x,x)\to G(x,x)\}_{x\in \cC}$ such that for any morphism $f:x\to y$ in $\cC$, the following diagram commutes.
\[
\begin{tikzcd}
    & {F(x,x)} \arrow[r, "\alpha_x"] & {G(x,x)} \arrow[rd, "{G(\id_x,f)}"]  &          \\
    {F(y,x)} \arrow[rd, "{F(\id_y,f)}"'] \arrow[ru, "{F(f,\id_x)}"] &                                &                                      & {G(x,y)} \\
    & {F(y,y)} \arrow[r, "\alpha_y"] & {G(y,y)} \arrow[ru, "{G(f,\id_y)}"'] &    
\end{tikzcd}
\]
\end{definition}
The set of dinatural transformations from $F$ to $G$ is denoted by $\dinat(F,G)$. 
One can check that the compositions of a dinatural transformation $\alpha:F\din G$ with natural transformations $\beta: G\Rightarrow G'$ and $\mu: F'\Rightarrow F$, defined by $(\beta\circ\alpha)_x:=\beta_{x,x}\alpha_x,$ and $(\alpha\circ\mu)_x:=\alpha_x\mu_{x,x}$ are still dinatural transformations. Denote by $\Delta_x:\cC^\op\xt \cC\to \cD$ the constant functor taking value at the object $x\in \cD$. It is clear that $\cD(x,y)\cong \Cat(\cC^\op\xt\cC, \cD)(\Delta_x,\Delta_y)$ ($\Delta$ is a fully faithful functor from $\cD$ to $\Cat(\cC^\op\xt\cC, \cD)$). 

\begin{definition}
    [(Co)end]
    Fix a functor $F:\cC^\op\xt \cC\to\cD$. A \emph{wedge} over $F$ is an object $d\in \cD$ together with a dinatural transformation $\alpha: \Delta_d\din F$. The \emph{end} of $F$, denoted by $(\End F,\pi:\Delta_{\End F}\din F)$ or simply $\End F$, is the universal wedge, i.e., it is the wedge such that the composition with $\pi$ is a bijection:
    \begin{align*}
        \cD(d,\End F) &\cong \dinat(\Delta_d, F)\\
        f &\mapsto \pi\circ \Delta_f.
    \end{align*}
     Denote the inverse map by $\alpha\mapsto \bar\alpha$. More explicitly, for any wedge $\alpha: \Delta_d\din F$ there is a unique morphism $\bar \alpha:d\to \End F$ such that $\alpha=\pi\circ\Delta_{\bar\alpha}$.

    Dually, a cowedge under $F$ is a dinatural transformation $F\din \Delta_d$ and the coend of $F$, $\Coend  F$, is the universal cowedge. End and coend are also denoted by the integral notation
    \[ \End F=\int_{x\in \cC} F(x,x), \quad \Coend F=\int^{x\in \cC} F(x,x).\]
    Due to the universal property, end and coend are functorial; given natural transformation $\mu:F'\Rightarrow F$ we also denote the corresponding morphism by the integral notation
    \[ \int^x \mu_{x,x}: \int^x F'(x,x)\to \int^x F(x,x).\]
    The index category $\cC$ in the integral will be omitted when no confusion arises. 
\end{definition}
\begin{remark} (Co)ends are special cases of (co)limits while (co)limits are also special cases of (co)ends. The $\cV$-functor category $\Fun(\cC,\cD)$ is automatically enriched in $\cV$, due to the fact that the set of natural transformations $\Fun(\cC,\cD)(F,G)$ is given by the end of $\cD(F(-),G(-)):\cC^\op\xt\cC\to \cV$,
    \[\Fun(\cC,\cD)(F,G) \cong \int_x \cD(F(x),G(x)),\]    
    which is automatically an object in $\cV$.
\end{remark}
\begin{remark}
For $\cV$-functors $F,G:\cC^\op\otv\cC\to\cD$, we define dinatural transformations between them as those between the corresponding bi-$\cV$-functors $F,G:\cC^\op\xt\cC\to\cD$. The (co)wedges and (co)ends of $\cV$-functors can then be defined similarly. 
\end{remark}
From now on, all the categories and functors are by default enriched over a fixed strongly concrete cosmos $\cV$ and we will omit the prefixes $\cV$- in most cases.

\section{Fixed-point local tensors and tube category \texorpdfstring{$\xc$}{XC}}\label{sec.xc}
We motivate the definition of tube category in this section. Firstly, we need to generalize several notions in tensor network theory:

\begin{definition}
\label{def.tensor}
    Let $\cC$ be a monoidal $\cV$-category. 
    \begin{enumerate}
    \item 
    We view a sequence of objects $x_1,x_2,\dots,x_n\in\cC$ as a one-dimensional \emph{lattice}; an object $x_i$ in the sequence is then called a \emph{(local lattice) site}. An \emph{operator} in $\cC$ on the lattice $x_1,\dots,x_n$ is an endo-morphism in $\cC(\bigotimes_{i=1}^n x_i,\bigotimes_{i=1}^n x_i)$.
    \item
   A \emph{local tensor} in $\cC$ on a local site $x\in \cC$ with \emph{virtual bonds} $a,b\in \cC$ is a morphism in $\cC(ax,xb)$. The local site $x$ is also referred to as the \emph{physical bond}. Let the physical bond vary over objects in $\cC$ while keeping the virtual bonds $a,b$ fixed, we get a $\cV$-functor $\cC(a-,-b):\cC^\op\otv\cC\to \cV$. The (horizontal) \emph{tensor contraction} is defined to be the following natural transformation $\ctr^{a,b,c}: \cC(a-,-b)\otv \cC(b?,?c)\Rightarrow \cC(a-?,-?c)$:%(\ot \otv \ot)(\id_{\cC^\op} \otv\, \mathrm{swap}_{\cC,\cC^\op} \otv \id_\cC)$, 
   \[
    \begin{tikzcd}
    \ctr^{a,b,c}_{x,x',y,y'}:{\cC(ax,x'b)\otv \cC(by,y'c)} \arrow[rrr, "(-\ot \id_y)\otv (\id_{x'} \ot -)"] & & & {\cC(axy,x'by)\otv \cC(x'by,x'y'c)} \arrow[d, "\circ"] \\
                                                               & & & {\cC(axy,x'y'c)}                                   
    \end{tikzcd}
   \]
%    where $\mathrm{swap}_{\cC,\cC^\op}:\cC\otv \cC^\op\to \cC^\op\otv\cC$ is just exchanging the arguments: $\mathrm{swap}(x',y)=(y,x')$.
   Graphically,
   \[\ctr^{a,b,c}_{x,x',y,y'}(\itk{
    \node (l) at (-2,-1) {};
    \node[rectangle,draw] (m) at (0,0) {$f$};
    \node(r) at (2,1) {};
    \draw 
      (0,-2)node[below]{$x$} -- (m) -- (0,2)node[above]{$x'$}
      (l)--node[above]{$a$} (m) -- node[above]{$b$} (r);
   }\otv 
   \itk{
    \node (l) at (-2,-1) {};
    \node[rectangle,draw] (m) at (0,0) {$g$};
    \node(r) at (2,1) {};
    \draw 
      (0,-2)node[below]{$y$} -- (m) -- (0,2)node[above]{$y'$}
      (l)--node[above]{$b$} (m) -- node[above]{$c$} (r);
   }
   )=\itk{
    \node (l) at (-2,-1) {};
    \node[rectangle,draw] (m) at (0,0) {$f$};
    \node[rectangle,draw] (r) at (2,1) {$g$};
    \node (rr) at (4,2) {};
    \draw 
      (0,-2)node[below]{$x$} -- (m) -- (0,3)node[above]{$x'$}
      (2,-2)node[below]{$y$} -- (r) -- (2,3)node[above]{$y'$}
      (l)--node[above]{$a$} (m) -- node[above]{$b$} (r) -- node[above]{$c$} (rr);
   }\ .
   \]
   % One can see that it is exactly the tensor contraction via the virtual bond $b$.
   \item
   A \emph{matrix product operator} (MPO) $O$ in $\cC$ on the lattice $x_1,\dots,x_n$ with open virtual bonds $a,b$ is a morphism $O\in \cC(a\left(\bigotimes_{i=1}^n x_i\right),\left(\bigotimes_{i=1}^n x_i\right)b)$ that can be represented in the form of repeated tensor contractions of local tensors, i.e.,
   \[ O=\itk{
    \node[rectangle,draw] (m1) at (0,0) {$f_{1}$};
    \node[rectangle,draw] (m2) at (2,0.2) {$f_{2}$};
    \node[rectangle,draw] (mm) at (8,0.8) {$f_{n-1}$};
    \node[rectangle,draw] (mn) at (10,1) {$f_{n}$};
    \node (l) at (-2,-0.2) {};
    \node (r) at (12,1.2) {};
    \draw 
      (0,-2)node[below]{$x_{1}$} -- (m1) -- (0,3)node[above]{$x_{1}$}
      (2,-2)node[below]{$x_{2}$} -- (m2) -- (2,3)node[above]{$x_{2}$}
      (8,-2)node[below]{$x_{n-1}$} -- (mm) -- (8,3)node[above]{${x_{n-1}}$}
      (10,-2)node[below]{$x_{n}$} -- (mn) -- (10,3)node[above]{$x_{n}$}
      (l)--node[above]{$a$} (m1) --  (m2) -- (4,0.4)
      (6,0.6) -- (mm) -- (mn) -- node[above]{$b$} (r);
    \node at (5,0.5){$\ldots$};
   }\ .\]
   Indices of contracted virtual bonds have been omitted. When the dangling open virtual bonds are trivial, $a=b=\one$, $O$ is an operator on $x_1,\dots,x_n$.
   \end{enumerate}
\end{definition}
\begin{remark}
    When $\cC=\vf$, we are back to ordinary tensor networks. An ordinary lattice is just a sequence of local vector spaces; ordinary operators, local tensors and MPOs are just ordinary linear maps between tensor product of multiple (physical or virtual) vector spaces. In this case, a local tensor in $\vf$ can be viewed as a ``matrix of local operators". More precisely, take $f\in \vf(ax,xb)$, we choose a basis $\{\alpha_i\}\subset a= \vf(\C,a)$ and a dual basis $\{\beta_j\}\in \vf(b,\C)$ of the virtual bonds, then 
    \[ f_{ij}:=\itk{
    \node[rectangle,draw] (l) at (-2,-1) {$\alpha_i$};
    \node[rectangle,draw] (m) at (0,0) {$f$};
    \node[rectangle,draw] (r) at (2,1) {$\beta_j$};
    \draw 
      (0,-2)node[below]{$x$} -- (m) -- (0,2)node[above]{$x$}
      (l)--node[above]{$a$} (m) -- node[above]{$b$} (r);
   }=(\id_x\ot \beta_j)f(\alpha_i\ot \id_x)\in \vf(x,x)\]
    is a local operator on the site $x$; a local tensor is equivalent to a matrix whose rows and columns are indexed by bases of virtual bonds and matrix elements are local operators. Moreover, matrix elements of the tensor contraction is first taking tensor product of matrix elements and then summing over the basis of the contracted virtual bond: $\forall f\in \vf(ax,xb), g\in \vf(by,yc),$
    \[ \left(\ctr^{a,b,c}_{x,x,y,y}(f\ot g)\right)_{ik}=\sum_j f_{ij}\ot g_{jk},\]
    which is like performing a matrix multiplication. Hence the name ``matrix product operator."
\end{remark}
\begin{remark}
    If there is a monoidal functor $F:\cC\to \ve$, lattices, local tensors and MPOs in $\cC$ can be mapped to ordinary ones using $F$. The pair $(\cC,F)$ plays the role of selecting a special class of vector spaces and linear maps; the kinematics of the resulting lattice system (in the image of $F$) is constrained. It is then natural to think that $(\cC,F)$ imposes a \emph{symmetry} on 1+1D lattice systems. A classical example is taking $\cC=\Rep G$ the category of representations of some group $G$, and $F$ the forgetful functor. Then the vector spaces allowed are those carrying $G$-symmetry charges ($G$-representations); the linear maps allowed are the $G$-symmetric operators (intertwiners). For a generic $\cC$, we will also physically interpret objects of $\cC$ as symmetry charges, morphisms in $\cC$ as symmetric operators (or tensors), and $\cC$ itself as the \emph{charge category}. In the pair $(\cC,F)$, $\cC$ plays the role of a charge category, and $F:\cC\to\ve$ provides a concrete realization of $\cC$ on ordinary lattice systems. Admitting such a fiber functor (a monoidal functor to $\ve$ or sometimes more strictly to $\vf$) is the physical condition for an abstract symmetry $\cC$ to be \emph{anomaly-free}~\cite{TW1912.02817,KLW+2005.14178,LYW2312.15958,MYLG2412.20546}. However, in this paper our treatments are all carried out within $\cC$; there is no need to require the existence of a fiber functor. Our work thus applies to anomalous symmetries as well. 
\end{remark}

From now on by local tensors or MPOs we always mean those in $\cC$, and occasionally we may use \emph{symmetric} local tensor and \emph{symmetric} MPO for emphasis. Physically it suffices to study only the MPOs for 1+1D systems, guaranteed by the following:
\begin{lemma}
   Suppose $\cC$ has (left) duals. Any operator in $\cC$ can be rewritten as an MPO. 
\end{lemma}
\begin{proof}
    Without losing generality, let's consider an operator $O\in \cC(xy,xy)$ on two adjacent sites $x,y$. The strategy to construct an MPO is simply by bending the physical legs. For example, let 
    $$f=\itk{
    \node[rectangle,draw, minimum width=5ex] (o) at (-0.5,0) {$O$};
    \node (xl) at (-1,2) {$x$};
    \node (x) at (-1,-2) {$x$};
    \node (a) at (0,2) {};
    \draw (x) -- (o.south-|x)
    (2,1)node[right]{$y$} to[out=-170,in=90] (o.north-|a)
    (o.north-|xl) -- (xl)
    (o.south-|a) to[out=-90,in=-170] (2,0)node[right]{$y^L$}
    ; }
    =O(\id_x\ot \coev_{y,y^L})\in \cC(x,xyy^L)$$ and $$g=
    \itk{
     \node (yl) at (4,-2) {$y$};
     \node (y) at (4,2.5) {$y$};
    \draw (2,1)node[left]{$y$} to[out=10,in=-90] (4,2)-- (y)
    (2,0)node[left]{$y^L$} to[out=10,in=90] (4,0) -- (yl);
    }=
    \id_y\ot \ev_{y^L,y}\in \cC(yy^Ly,y),$$
    one can easily check that
    \[ O=\itk{
    \node[rectangle,draw, minimum width=5ex] (o) at (-0.5,0) {$O$};
    \node (xl) at (-1,2.5) {$x$};
    \node (x) at (-1,-2) {$x$};
    \node (a) at (0,2) {};
    \draw (x) -- (o.south-|x)
    (2,1) to[out=-170,in=90] (o.north-|a)
    (o.north-|xl) -- (xl)
    (o.south-|a) to[out=-90,in=-170] (2,0);
    \node (yl) at (4,-2) {$y$};
     \node (y) at (4,2.5) {$y$};
    \draw (2,1) to[out=10,in=-90] (4,2)-- (y)
    (2,0) to[out=10,in=90] (4,0) -- (yl);}
    =
    \ctr^{\one,yy^L,\one}_{x,x,y,y}(f\otv g).\] Note that the choice of left or right duals here is just a matter of convention depending on how we orient the virtual bonds. 
\end{proof}
\begin{remark}
    The way to rewrite an operator as an MPO is in general not unique. The naive decomposition in the above proof takes the virtual bond $yy^L$. Similar choices for a long lattice $x_1,\dots,x_n$ would lead to very large virtual bonds, which is physically not desirable. If the morphisms in $\cC$ admit good factorizations, the virtual bonds can be chosen in a more canonical way. In ordinary tensor networks, the singular value decomposition is often utilized. More generally, assume that $\cC$ admits image decomposition, i.e., for any morphism $f:a\to b$, there exists an object $\im f$, an epimorphism $e:a\to \im f$ and a monomorphism $m:\im f\to b$, such that $f=me$. Then an operator $O\in \cC(xy,xy)$ can be decomposed along the horizontal direction, and becomes an MPO in a canonical way. More precisely, one can 
    \begin{enumerate}
        \item bend the legs to get the mate 
    \begin{align*}
    \tilde O&=
    \itk{
    \node[rectangle,draw, minimum width=10ex] (o) at (0,0) {$O$};
    \node (xl) at (-3,-2) {$x^L$};
    \node (x) at (-1,-2) {$x$};
    \node (yl) at (3,2) {$y^L$};
    \node (y) at (1,2) {$y$};
    \draw (x) -- (o.south-|x)
    (y) -- (o.north-|y)
    (o.north-|x) arc(0:180:1) -- (xl)
    (o.south-|y) arc(180:360:1) -- (yl)
    ; }
    \\
    &=(\ev_{x^L,x}\ot\id_{yy^L})(\id_{x^L}\ot O\ot \id_{y^L})(\id_{x^Lx}\ot \coev_{y,y^L})\in \cC(x^Lx, yy^L), 
    \end{align*}
    \item then take the image decomposition of $\tilde O$: \[\tilde O=x^Lx\xrightarrow[]{e} \im \tilde O\xrightarrow[]{m}yy^L,\]
    graphically,
    \[ \itk{
    \node[rectangle,draw, minimum width=10ex] (o) at (0,0) {$O$};
    \node (xl) at (-3,-2) {$x^L$};
    \node (x) at (-1,-2) {$x$};
    \node (yl) at (3,2) {$y^L$};
    \node (y) at (1,2) {$y$};
    \draw (x) -- (o.south-|x)
    (y) -- (o.north-|y)
    (o.north-|x) arc(0:180:1) -- (xl)
    (o.south-|y) arc(180:360:1) -- (yl)
    ; }=
    \itk{
    \node[rectangle,draw] (e) at (0,-0.5) {$e$};
    \node[rectangle,draw] (m) at (0,1.5) {$m$};
    \node (xl) at (-1,-2) {$x^L$};
    \node (x) at (1,-2) {$x$};
    \node (yl) at (1,3) {$y^L$};
    \node (y) at (-1,3) {$y$};
    \draw (x) -- (e) -- (xl)
    (y) -- (m) -- (yl) (e)--node[right]{$\im\tilde O$}(m)
    ; },
    \]
    \item finally bend the legs back to obtain an MPO 
    \begin{align*}
    O&=
    \itk{
    \node[rectangle,draw, minimum width=10ex] (o) at (0,0) {$O$};
    \node (xl) at (-5,2) {$x$};
    \node (x) at (-1,-2) {$x$};
    \node (yl) at (5,-2) {$y$};
    \node (y) at (1,2) {$y$};
    \draw (x) -- (o.south-|x)
    (y) -- (o.north-|y)
    (o.north-|x) arc(0:180:1) -- (-3,0) arc(360:180:1) -- (xl)
    (o.south-|y) arc(180:360:1) -- (3,0) arc(180:0:1)-- (yl)
    ; }
    \\
    &=\itk{
    \node[rectangle,draw] (e) at (-1,-0.5) {$e$};
    \node[rectangle,draw] (m) at (1,1.5) {$m$};
    \node (xl) at (-3,3) {$x$};
    \node (x) at (-1,-2) {$x$};
    \node (yl) at (3,-2) {$y$};
    \node (y) at (1,3) {$y$};
    \draw (x) -- (e) to[out=-135,in=-90] (xl)
    (y) -- (m) to[out=45,in=90] (yl) (e)--node[above left]{$\im\tilde O$}(m)
    ; }
    \\
    &=\ctr^{\one,\im\tilde O,\one}_{x,x,y,y}\Big(\big(e(\coev_{x,x^L}\ot \id_x)\big)\otv \big((\id_y\ot\ev_{y^L,y})m\big)\Big)
    \end{align*}
    (see also~\cite{LZ2305.12917}).
    \end{enumerate} Such canonically determined virtual bond $\im \tilde O$ (which is of course an object in the charge category $\cC$) can be understood as the charge transported by $O$. It further leads to the idea of quantum current~\cite{LZ2305.12917} that a symmetric MPO describes how the symmetry charge (virtual bonds) flows in the one-dimensional space, with the local tensors interpreted as how the symmetry charge interacts quantum mechanically with the local sites as it flows.
\end{remark}
\bigskip

In this work we are interested in the universal properties of symmetric operators in 1+1D lattice systems with symmetries. Since any symmetric operator can be written as a symmetric MPO, and any MPO essentially consists of local tensors, it should suffice to study the local tensors. As we care about universal properties, we should not restrict to particular lattices; instead, we should study local tensors on arbitrary local sites. Therefore, our strategy is to take $\cC(ax,xb)$ for arbitrary $x\in \cC$, and then try to ``integrate out" the local site $x$. In the next several paragraphs, we will argue that
\begin{center}
\emph{
The coend 
$\int^x \cC(ax,xb)$
is the universal space of fixed-point local tensors.}
\end{center}
\medskip

% Let $\cC$ be a linear monoidal category. If $\cC$ admits a fiber functor $V:\cC\to \ve$, $\cC, V$ together can be used to construct 1+1D lattice models with symmetry~\cite{LZ2305.12917}. Objects in $\cC$ (e.g. consider $\cC=\Rep G$) are considered as symmetry charges while morphisms in $\cC$ are considered as symmetric operators. Any symmetric operator acting on such a 1+1D lattice can be represented, after repeated singular value decompositions along the spatial direction, by a symmetric MPO.  A symmetric MPO consists of local tensors $V(f): V(a)\ot V(x) \to V(x) \ot V(b)$ where $f:ax\to xb$ is a morphism in $\cC$,
% %(for compactness we write tensor product of objects in $\cC$ by juxtaposition)
% $V(x)$ is a local Hilbert space (the physical bond) while $V(a)$, $V(b)$ are the virtual bonds (see the picture in Definition~\ref{def.tensor}). One contracts $V(a), V(b)$ to obtain a (long) symmetric MPO.
% We would like to formalize such process abstractly in a monoidal $\cV$-category $\cC$ without referring to a fiber functor. 

We first review renormalizations in the usual sense. Usual renormalization schemes generally depend on the dynamics of the physical systems; for a quantum lattice model, the dynamics is controlled by the Hamiltonian, which is a Hermitian operator. Therefore, for the moment, we restrict to the case $\cC=\Hilb$, the category of Hilbert spaces, which is a $\dag$-category, with the $\dag$-structure being the usual Hermitian conjugate. This is only for the purpose of motivation; later we will drop the requirement for the $\dag$-structure.

Given a lattice $x_1, x_2, \dots, x_n$ in $\Hilb$ and meanwhile a Hamiltonian $H\in \Hilb(\ot_i x_i,\ot_i x_i)$, we consider a typical renormalization scheme. The first step is to ``coarse-grain'' the lattice sites by combining several adjacent sites into one. For notational simplicity and without losing generality, we can just take $x_i$ as the tensor product of several adjacent lattice sites in the original lattice model, and then represent the coarse-graining by linear maps $p_i: x_i\to y_i$, where $y_i$ is the coarse-grained lattice site, and $y_1,\dots,y_n$ is the coarse-grained lattice. We assume that $p_i$ is surjective and $p_i^\dag:y_i\to x_i$ is an isometry which preserves the inner product, i.e., $p_ip_i^\dag=\id_{y_i}$. Denote by $P=\ot_i p_i:\ot_i x_i\to \ot_i y_i$ the linear map between total Hilbert spaces. We refer to $P$ or $p_i$ as the \emph{lattice renormalization} map. For a chosen $P$, there is naturally an \emph{operator renormalization} map
\begin{align*}
\pi_P: \Hilb(\ot_i x_i,\ot_i x_i)&\to \Hilb(\ot_i y_i,\ot_i y_i)\\
O &\mapsto \pi_P(O)=POP^\dag.
\end{align*}

After the coarse-graining, one needs to determine the effective Hamiltonian on the coarse-grained lattice. A natural choice is to take
\[ H_\eff=\pi_P(H)=PHP^\dag.\]
At this step, the dynamics must kick in; the lattice renormalization $P=\ot_i p_i$ has to be carefully chosen so that the effective Hamiltonian $H_\eff$ can correctly represent the low energy behavior of the original Hamiltonian $H$. Such requirement translates to that
\begin{itemize}
    \item $P^\dag H_\eff P=P^\dag P H P^\dag P$ must act within the low energy eigen-subspace of $H$;
    \item or equivalently, $P^\dag P$ projects to the low energy eigen-subspace of $H$.
\end{itemize}
We will loosely refer to the conditions above as \emph{preserving the (low-energy) dynamics}.

The above renormalization scheme does not fit the purpose of this paper --- studying the universal behavior of lattice models with symmetries. We would not like to restrict to a specific lattice, nor a specific Hamiltonian; instead, we care about all the possible lattices and all the possible symmetric Hamiltonians. For such purpose, we would like to generalize the idea of renormalization:
\begin{enumerate}
    \item  Any map $p_i:x_i\to y_i$ should be qualified as a lattice renormalization, in the sense that there exists a Hamiltonian (for example, just the sum of local projections $H=-\sum_i p_i^\dag p_i$) such that $P=\ot_i p_i$ preserves the low-energy dynamics. 
    \item We also want to reformulate the above scheme to get rid of the $\dag$-structure. Consider the map $h:=HP^\dag$. As $P^\dag$ is assumed to be an isometry, $h$ is essentially the renormalized Hamiltonian. It is not a usual operator on a fixed lattice, but rather an ``operator'' from the coarse-grained lattice to the original lattice; the latter is indeed a better formulation in the context of renormalization as we need to constantly deform the physical systems. To get the effective Hamiltonian $H_\eff$, as a usual operator on the coarse-grained lattice, from $h$, one choice is to post-compose the lattice renormalization
    \[ Ph;\]
    the other choice is to first pre-compose the lattice renormalization and then apply the operator renormalizatoin map
    \[ \pi_P(hP).\]
    It is clear that the two choices agree and both define the effective Hamiltonian
    \begin{equation}
        Ph=PHP^\dag=H_\eff=PHP^\dag PP^\dag=\pi_P(hP). \label{eq.eff}\tag{\#}
    \end{equation}
    This property \eqref{eq.eff} can be interpreted in the following way: after operator renormalization to the coarse-grained lattice, $h$ commutes with the lattice renormalization $P$; such condition is essential in our later generalizations.
    \item Finally, we would like to allow the lattice renormalization and operator renormalization to be independent. In the above usual scheme, the lattice renormalization $P$ should be chosen against a specific Hamiltonian $H$, and the operator renormalization $\pi_P$ is then chosen against $P$. Such logic fails when no specific dynamics is given. We thus want to study independent operator renormalizations against arbitrary lattice renormalizations; neither of them are required to preserve any specific dynamics.
\end{enumerate}

Since in 1+1D any operator can be expressed as an MPO, we will study the local \emph{tensor renormalization} instead of the operator renormalization. We will see that the cowedge $\alpha: \cC(a-,-b)\din \Delta_X$ (taking $\cV=\ve$) exactly fits our generalized idea of renormalization. For each $x\in \cC$, there is a linear map $\alpha_x: \cC(ax,xb)\to X$. We first explain that these linear maps $\alpha_x$ can be understood as tensor renormalizations.
Explicitly, the condition for these linear maps to form a dinatural transformation is that
\[\begin{tikzcd}
{\cC(ay,xb)} \arrow[rr, "{\cC(af,xb)}"] \arrow[d, "{\cC(ay,fb)}"'] && {\cC(ax,xb)} \arrow[d, "\alpha_x"] \\
{\cC(ay,yb)} \arrow[rr, "\alpha_y"]                                        & &X                                 
\end{tikzcd}\]
which means that for any $h:ay\to xb$ and any symmetric operator $f:x\to y$, 
\[ \alpha_x\Big(h(\id_a\ot  f)\Big)=\alpha_y\Big((f\ot \id_b)h\Big). \]
Graphically,
\begin{equation}
\alpha_x\Big(
   \itk{
    \node (l) at (-2,-1) {};
    \node[rectangle,draw] (m) at (0,0) {$h$};
    \node[rectangle,draw] (mu) at (0,-2) {$f$};
    \node(r) at (2,1) {};
    \draw 
      (0,-3)node[below]{$x$} -- (mu) --node[right]{$y$} (m)--(0,1)node[above]{$x$}
      (l)--node[above]{$a$} (m) -- node[above]{$b$} (r);
   }
   \Big)=\alpha_y\Big(
  \itk{
    \node (l) at (-2,-1) {};
    \node[rectangle,draw] (m) at (0,0) {$h$};
    \node[rectangle,draw] (mu) at (0,2) {$f$};
    \node(r) at (2,1) {};
    \draw 
      (0,-1)node[below]{$y$} -- (m) --node[left]{$x$}(mu)-- (0,3)node[above]{$y$}
      (l)--node[above]{$a$} (m) -- node[above]{$b$} (r);
   }
   \Big)
\label{eq.cowedge}\tag{$\ddag$}    
\end{equation}
The reader may spot that the dinatural condition \eqref{eq.cowedge} is a generalization of the condition \eqref{eq.eff}, under the following interpretation:
We view $f:x\to y$ as a lattice renormalization and $h$ as an MPO (with open virtual bonds $a,b$) acting between potentially different lattices. The vector space $X$ is viewed as an abstract space of local tensors, and $\alpha_x,\alpha_y$ as tensor renormalization maps. Then, $h(\id_a\ot f)$ is applying the MPO $h$ after a lattice renormalization $f$, while $(f\ot\id_b)h$ is applying the MPO $h$ before a lattice renormalization $f$. The results are local tensors on potentially different lattice sites $x,y$, which can not be compared directly.  After applying tensor renormalizations $\alpha_x,\alpha_y$, the results can be compared in $X$. Therefore, we interpret the dinatural condition as requiring that, after tensor renormalization (applying $\alpha$), the local tensor is invariant before and after an arbitrary lattice renormalization $f$. In short, the image of $\alpha$ in $X$ are renormalized local tensors, which are at fixed-point under lattice renormalizations. 

Note that here we imposed only the constraints regarding the symmetry (requiring all tensors to be morphisms in $\cC$). No dynamical information has been taken into account. Therefore, the renormalization linear map $\alpha_x$ in a cowedge is in general ``wilder'' than a usual tensor renormalization that is required to preserve dynamics as we discussed before. To conclude, a cowedge describes a limiting space of tensors under wild renormalizations which are required to preserve only symmetry but not any dynamics; dynamical information may be lost.
To fix such issue, we take the coend of $\cC(a-,-b)$, i.e., the universal cowedge. For any cowedge $X$ under $\cC(a-,-b)$, there exists a unique map $\int^x\cC(ax,xb)\to X$ compatible with the dinatural transformations (renormalizations to either $\int^x\cC(ax,xb)$ or $X$); thus the coend $\int^x\cC(ax,xb)$ is like a ``supremum": it loses the least information among all the wild renormalizations (cowedges). We may draw a line to arrange various renormalization schemes, depending on how much information is maintained at fixed-point:
\medskip
\begin{center}
    \itk{
    \draw[red]  (0,0)--node[below]{cowedges (wild renormalizations)}  (12,0)node[violet,above]{coend};
    \draw[blue,-Stealth] (12,0) -- (18,0) --node[below]{usual renormalizations} (24,0);
    \node[circle, fill, scale=0.5,violet] at (12,0) {};
    }
\end{center}
\begin{itemize}
    \item The usual renormalization takes both the symmetry and the dynamics into account, so they are on the right;
    \item The cowedges (wild renormalizations) takes only the symmetry but not the dynamics into account, so they are on the left;
    \item The coend is the rightmost cowedge.
\end{itemize}
It is an interesting question whether there are always usual renormalization schemes which approaches the coend from the right; the answer is unclear to us but we wishfully expect so. In this work we will be satisfied working with the coend.

\begin{remark}\label{remark.trace} 
    It is inspiring to consider the example of ordinary tensor networks (i.e., no symmetry, taking $\cC=\vf$ as a linear monoidal category). The dinatural condition \eqref{eq.cowedge} is also similar to the typical property of the ordinary trace of linear maps: For $x,y\in \vf$ two vector spaces (not necessarily $x=y$), any two linear maps $U:x\to y$ and $V:y\to x$ satisfy $\Tr_x(VU)=\Tr_y(UV)$. Indeed, one can easily verify that the coend $\int^{x} \vf(x,x)$ is given by $\C$ with the universal dinatural transformation being the ordinary trace function $\Tr_x:\vf(x,x)\to \C$. More generally, $\int^{x}\vf(ax,xb)$ is given by $\vf(a,b)$ with the universal dinatural transformation being the partial trace. 
    Partial trace agrees with our above discussions in its renormalization features. Let's consider, for simplicity, the partial trace on a bipartite system. Suppose that (the vector space of) the total system is $A\ot B$, which consists of (the vector spaces of) two subsystems $A$ and $B$. Taking partial trace over subsystem $A$ is a linear map $\Tr_A:\vf(A\ot B,A\ot B)\to \vf(B,B)$. The partial trace $\Tr_A$ ``integrates out'' the information in $A$, with a largest possible space $\vf(B,B)$ as its target space, such that the measurements (expectation values) of operators supported on $B$ are all preserved. Therefore, one can view the coend $\int^x\cC(ax,xb)$ as a generalized partial trace, which takes the symmetry into account. This perspective also partially justified the notation $\int^x$, in that the coend is indeed a generalization, or categorification, of summation, integration, or taking trace. Thus, the coend $\int^x\cC(ax,xb)$ indeed ``integrates out'' the local site $x$.
\end{remark}
\bigskip

\begin{remark}
    We used the graphical convention for 1+1D operators (tensors) that the spatial direction is horizontal and the time direction is vertical. The virtual bonds are along the spatial direction and the physical bonds are along the time direction. Another reasonable interpretation to the cowedge $\alpha: \cC(a-,-b)\din \Delta_X$ is to consider a ``Wick-rotation,'' i.e., for $h:ay\to xb$, think $x,y$ as virtual bonds (along the spatial direction) while $a,b$ as physical bonds (along the time direction). Graphically, one can take \eqref{eq.cowedge} and rotate by $90^\circ$ (switching spatial and time directions). In such picture, the dinatural transformation $\alpha_x$ or $\alpha_y$ should be understood as gluing the virtual bond $x$ or $y$, respectively, and then the physical bonds $a,b$ are the degrees of freedom on a spatial topology of a ring $S^1$, while $\alpha_x\Big(h(\id_a\ot  f)\Big)=\alpha_y\Big((f\ot\id_b)h\Big)$ is an operator acting on $S^1$ which might involve a lattice renormalization that changes the physical bond from $a$ to $b$. We can see that the dinatural condition of a cowedge properly enforces the periodic boundary condition on a ring, in the sense that any $f:x\to y$ can be moved across the periodic boundary to the other side. The spacetime topology of such Wick-rotated local tensors is clearly a tube $S^1\xt I$. We will see that the coend of $\cC(a-,-b)$ allows us to define a new category $\xc$ in a very natural way, which recovers the tube category in the literature when $\cC$ is a spherical fusion category.
\end{remark}

\begin{remark}
Dually, for a wedge $\alpha: \Delta_X\din \cC(a-,-b)$, $\alpha_x:X\to\cC(ax,xb)$ picks out tensors which exactly commute with all local symmetric perturbations. The renormalization picture (projecting out higher energy operators) seems more general, and it is less general to require local tensors to exactly commute with perturbations (for example, if we are considering a continuous spectrum). For the Wick-rotated picture, the dinatural condition for wedge also fails to satisfy the periodic boundary condition. Although the universal property of end does not naturally lead to the tube category, it is $\End \cC(a-,-b)$ that is mostly considered in the literature (see e.g.~\cite{HXZ2404.01162,DS07}). Technically, this may be due to the fact that the hom functor preserves limits, thus using the end (assuming rigidity)
\[ \int_x \cC(ax,xb)\cong \cC(a,\int_x xbx^L)\cong \cC(\int^x x^L a x,b),\]
one directly obtain representable functors, and further more functors $\int^x x^L ? x$ and $\int_x x?x^L$ out of $\cC$, which turns out to map into $Z(\cC)$ and are left/right adjoint to the forgetful functor $Z(\cC)\to \cC$. Under nice conditions, e.g. $\cC$ is finite semisimple, the end and coend do coincide; however, we emphasize that it is $\Coend \cC(a-,-b)$ that allows a much more natural physical interpretation as well as mathematical construction.
\end{remark}

\bigskip

Based on pervious discussions, we now give the key definition in this work    
\begin{definition}
    [Tube category $\xc$] \label{def.xc} Let $\cC$ be a rigid monoidal $\cV$-category. We define the $\cV$-category $\xc$, called the \emph{tube category} of $\cC$, as follows:
    \begin{enumerate}
    \item The category $\xc$ has the same objects as $\cC$, $\Ob(\xc)=\Ob(\cC)$.
    \item The hom spaces are defined to be
    \[ \xc(a,b):=\Coend \cC(a-^{RR},-b)=\int^x \cC(ax^{RR},xb).\] The corresponding universal dinatural transformation is denoted as $\pi^{a,b}_x:\cC(ax^{RR},xb)\to \int^y\cC(ay^{RR},yb).$
   \item The composition is induced by tensor contraction and renormalization (taking coend), 
   \[  
    \begin{tikzcd}
    \int^x \cC(ax^{RR},xb) \otv \int^y \cC(by^{RR},yc)
    \arrow[rrr,"\int^{x,y} \ctr^{a,b,c}_{x^{RR},x,y^{RR},y}"]
    &&&\int^{x,y}\cC(ax^{RR}y^{RR},xyc)
    \arrow[d]\\
    &&& \int^x\cC(ax^{RR},xc)
    \end{tikzcd}
    \]
    where the last arrow is uniquely determined by viewing the coend $(\int^x\cC(ax^{RR},xc),\pi^{a,c})$ as a cowedge under $$\cC(a-^{RR}?^{RR},-?c):(\cC\otv\cC)^\op\otv(\cC\otv\cC)\to \cV$$ with dinatural transformation $\beta$, whose components are $$\beta_{(x,y)}=\pi^{a,c}_{xy}:\cC(ax^{RR}y^{RR},xyc)\to \int^z\cC(az^{RR},zc).$$  
    More explicitly, take $f:ax^{RR}\to xb$ and $g:by^{RR}\to yc$, the composition in $\xc$ is 
    \[ \pi^{b,c}_y(g)\circ \pi^{a,b}_x(f)=\pi^{a,c}_{xy}(\ctr^{a,b,c}_{x^{RR},x,y^{RR},y}(f\otv g)).\]
    \item The identity morphism in $\xc(a,a)$ is given by $\pi^{a,a}_\one(\id_a).$
    \end{enumerate}
    
\end{definition}

\begin{remark} We explain the physical meanings of Definition~\ref{def.xc}:
\begin{enumerate}
    \item 
 The physical meaning of objects of $\xc$ is that they label the open virtual bonds of MPOs, and thus also symmetry charges flowing into and out of regions on which the MPOs are supported. 
 \item On the hom spaces we have added a double (right) dual. This is for the purpose of mathematical generality. Such defined tube category is naturally related to the Drinfeld center without further requirements. For physical applications, we may assume that $\cC$ is pivotal, so that $x^{RR}\cong x$, and the hom spaces $\xc(a,b)$ are indeed the spaces of fixed-point local tensors $\int^x\cC(ax,xb)$. Based on such perspective, we will also call the tube category $\xc$ as the category of fixed-point local tensors.
 \item Physically we can see that the composition in the tube category is exactly the horizontal tensor contraction of fixed-point local tensors.
\end{enumerate}
\end{remark}
\bigskip

We need to further pass from $\xc$ to the functor category $\coc\xc$; the main motivations are explained below:
\begin{enumerate}
    \item $\coc\xc$ is the \emph{free cocompletion} of $\xc$; in particular it contains the idempotent completion of $\xc$.
    
We have explained that the coend $\int^x \cC(ax,xb)$ is physically the space of fixed-point local tensors. Coend is a special type of colimit and we believe that in general colimit computes certain kinds of renormalization. 
Although $\xc$ above is a well-defined category, it is in general not complete or cocomplete. We have not considered the renormalization of the virtual bonds $a,b,\dots\in\Ob(\cC)=\Ob(\xc)$. In particular, if a local tensor is invariant under horizontal tensor contraction followed by renormalization, by definition (the renormalization of) such a local tensor is an idempotent in $\xc$. The retract (image) of an idempotent is an absolute (co)limit. We should at least include all the retracts, i.e., form the idempotent completion of $\xc$.  More generally, we would like to ``freely add colimits to 
$\xc$'' and study the free cocompletion, i.e., the functor category $\coc{\xc}$. Physically we consider $\coc{\xc}$ as the complete space of fixed-point local tensors. 

\item $\coc\xc$ provides a perspective of TFT with defects.

Using the ``Wick-rotated'' picture, objects in $\xc$ label physical systems on a ring $S^1$ and $\int^x \cC(ax^{RR},xb)$ are operators transforming from $a$ to $b$. We can thus consider manifolds $S^1$ decorated by objects in $\xc$ and manifolds $S^1\times I$ decorated by morphisms in $\xc$. Both decorations can be viewed as defects on the manifolds. The functor category $\coc{\xc}$ (for $\cV=\ve$) then admits a TFT-flavored interpretation, i.e., a functor $F:\xc^\op\to\ve$ assigns vector spaces $F(a)$, $F(b)$ to the ring $S^1$ decorated by $a,b$ and assigns linear maps $F(f):F(b)\to F(a)$ to the tube $S^1\times I$ decoreated by $f\in \int^x \cC(ax^{RR},xb)$.

\item $\coc\xc$ consists of the ``sectors of operators.''

Recall that a unital associative algebra $A$ can be viewed as a category $BA$ with a single object, and the category of left $A$-representations is $\Rep A=\Fun(BA,\ve)$. If we view a linear category as a ``horizontal categorification'' of an algebra by allowing more than one object, $\coc{\xc}$ can then be viewed as the category of (right) representations of $\xc$. 
$\coc{\xc}$ is closely related to the Drinfeld center $Z(\cC)$, as we will discuss in the next section. In the enriched category proposal for topological holography~\cite{KWZ2108.08835}, it is expected that the ``sectors of non-local symmetric operators'' form the Drinfeld center and serve as the background category of enrichment. Recall that a superselction sector (of states) is a representation of the local operator algebra. We believe that the appropriate definition for ``sectors of operators'' should be representations of local tensors; this is exactly $\coc\xc$, as $\xc$ is the category of fixed-point local tensors.
\end{enumerate}

\section{The embedding \texorpdfstring{$Z(\cC)\hookrightarrow \coc{\xc}$}{from Drinfeld center}}\label{sec.embed}
The following theorem can be thought of as a corollary of the more general theorem in the next section. As its proof can be put into a more intuitive graphical form, we elaborate in this section. One can see intuitively how quantum current MPOs~\cite{LZ2305.12917}, i.e., MPOs whose local tensors are the half-braidings in the Drinfeld center $Z(\cC)$, form the representations of local tensors.
\begin{theorem}
    There is a fully faithful functor $Y:Z(\cC)\to \coc{\xc}$.
\end{theorem}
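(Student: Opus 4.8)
The plan is to construct the functor $Y$ explicitly on objects and morphisms, then verify functoriality, and finally establish full faithfulness by exhibiting an inverse on hom-spaces. On objects, given $(z,\gamma_{z,-})\in Z(\cC)$, I would define the presheaf $Y(z,\gamma)\in\coc{\xc}$ by $Y(z,\gamma)(a):=\cC(a,z)$ for each $a\in\Ob(\xc)=\Ob(\cC)$ — this is the value one expects if $Y$ is to restrict to the Yoneda embedding after composing with the forgetful functor, and it is the natural "right representation of the tube category" attached to a central object. The essential point is to specify the action of $Y(z,\gamma)$ on a morphism in $\xc(a,b)$, i.e.\ on an element $\pi^{a,b}_x(f)$ with $f:ax\to xb$: I would set $Y(z,\gamma)(\pi^{a,b}_x(f)):\cC(b,z)\to\cC(a,z)$ to send $h:b\to z$ to the composite $a\xrightarrow{?} ax\xrightarrow{f} xb\xrightarrow{\id_x\ot h} xz\xrightarrow{\gamma_{z,x}^{-1}} zx\xrightarrow{?} z$, where the first and last arrows use duality: one caps off $x$ with a coevaluation/evaluation pair, using pivotality to make the two sides match. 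Concretely, using $\coev$ to create $x\ot x^R$ (or $x^L\ot x$) and $\ev$ together with $\phi$ to annihilate it, one obtains a well-defined map $\cC(ax,xb)\to \cV(\cC(b,z),\cC(a,z))$, and the key lemma is that this map is dinatural in $x$, hence factors through the coend $\xc(a,b)$. The dinaturality check is where pivotality and the naturality of $\gamma_{z,-}$ enter, and I expect the bookkeeping of left versus right duals there to be the first real obstacle.

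Next I would check that $Y(z,\gamma)$ is genuinely a $\cV$-functor $\xc^\op\to\cV$: unit morphism $\pi^{a,a}_\one(\id_a)$ must act as the identity (immediate from the definitions), and composition must be respected, $Y(z,\gamma)(\pi^{b,c}_y(g)\circ\pi^{a,b}_x(f))=Y(z,\gamma)(\pi^{a,b}_x(f))\circ Y(z,\gamma)(\pi^{b,c}_y(g))$. By Definition~\ref{def.xc} the left side is $Y(z,\gamma)$ applied to $\pi^{a,c}_{xy}(\ctr^{a,b,c}_{x,x,y,y}(f\otv g))$, so this reduces to a diagrammatic identity: contracting along the $b$-bond and then capping off the combined $xy$-bond equals capping off $y$, then $x$, with the central structure $\gamma_z$ braiding through in the right order. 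This is precisely the computation that the hexagon axiom for $\gamma_{z,-}$ is designed to make work, so I would invoke the hexagon (together with pivotality for the caps) at this step. Functoriality of $Y$ on morphisms of $Z(\cC)$ is easier: a central morphism $f:(z,\gamma)\to(z',\tau)$ gives the natural transformation $Y(f)_a:=f\circ(-):\cC(a,z)\to\cC(a,z')$, and the intertwining condition $\tau_{z',x}(f\ot\id_x)=(\id_x\ot f)\gamma_{z,x}$ is exactly what is needed for $Y(f)$ to be a morphism of presheaves, i.e.\ compatible with the $\xc^\op$-actions defined above.

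Finally, for full faithfulness I would compute $\coc{\xc}(Y(z,\gamma),Y(z',\tau))$ and compare it with $Z(\cC)((z,\gamma),(z',\tau))$. A morphism $\eta:Y(z,\gamma)\Rightarrow Y(z',\tau)$ is a $\cV$-natural family $\eta_a:\cC(a,z)\to\cC(a,z')$; by the (enriched) Yoneda lemma applied to the underlying presheaves on $\cC$, such a family naturally in $\cC$ is the same as a morphism $u:z\to z'$ in $\cC$, namely $u=\eta_z(\id_z)$, with $\eta_a(h)=u\circ h$. The extra content is that $\eta$ must also be compatible with the $\xc^\op$-action — the morphisms $\pi^{a,b}_x(f)$ — and I would show that this compatibility condition is equivalent, after stripping the duality caps, to the half-braiding intertwining relation $\tau_{z',x}(u\ot\id_x)=(\id_x\ot u)\gamma_{z,x}$ for all $x$, i.e.\ precisely to $u$ being a morphism in $Z(\cC)$. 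This gives a bijection $Z(\cC)((z,\gamma),(z',\tau))\xrightarrow{\sim}\coc{\xc}(Y(z,\gamma),Y(z',\tau))$, and one checks it is a $\cV$-isomorphism, not merely a bijection, since everything is assembled from equalizers and hom-objects in $\cV$. The main obstacle throughout is disciplined handling of the pivotal structure and the left/right dual conventions in the dinaturality and composition-compatibility arguments; once those are set up cleanly, full faithfulness falls out of Yoneda plus the translation of the action-compatibility into the half-braiding axiom.
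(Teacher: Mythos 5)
Your proposal is correct and follows essentially the same route as the paper: define $Y_{(z,\gamma)}(a)=\cC(a,z)$, obtain the action on $\xc(a,b)$ by capping off the physical bond $x$ with a pivotal (co)evaluation pair and passing $z$ through via the half-braiding, check this is dinatural in $x$ so it factors through the coend, verify functoriality with the hexagon axiom, and get full faithfulness from the Yoneda lemma plus the observation that $\xc$-naturality of $\eta$ translates into the centrality condition on $u=\eta_z(\id_z)$. The only cosmetic difference is that the paper routes the half-braiding through $x^R$ (using $\gamma_{z,x^R}$ and $\ev_{x,x^R}$) rather than $\gamma_{z,x}^{-1}$, which is the kind of left/right-dual bookkeeping you already flagged.
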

\begin{proof}
We would like to show that any object $(z,\gamma)\in Z(\cC)$ determines a functor $Y_{(z,\gamma)}: \xc^\op\to \cV$. On objects, $Y_{(z,\gamma)}a:=\cC(a,z)$. On morphisms, we define $$Y_{(z,\gamma)}^{a,b}:\int^x \cC(ax^{RR},xb) \to \cV(\cC(b,z),\cC(a,z)),$$
to be the unique map in $\cV$ determined by the following dinatural transformation, constructed using the half-braiding,
\begin{align*}
   \tau^{(z,\gamma);a,b}_x :\cC(ax^{RR},xb) &\to \cV(\cC(b,z),\cC(a,z))\\
   h &\mapsto \tau^{(z,\gamma);a,b}_x(h),
\end{align*}
where $\tau^{(z,\gamma);a,b}_x(h)$ is the map in $\cV$ that takes $f\in \cC(b,z)$ to the following composition:
\[
a\xrightarrow{\id\ot\coev}ax^{RR}x^R
\xrightarrow{h\ot\id}xbx^R
\xrightarrow{\id\ot f\ot \id}xzx^R
\xrightarrow{\id\ot \gamma}xx^Rz\xrightarrow{\ev\ot\id} z,
\]
which can be expressed graphically as
\[  \tau^{(z,\gamma);a,b}_x(h)(f)=    
\itk{
    \node (l) at (-2,-1) {$a$};
    \node[rectangle,draw] (m) at (0,0) {$h$};
    \node[rectangle,draw] (r) at (2,1) {$f$};
    \node (rr) at (4,2) {};
    \draw 
      (0,-2) -- (m) -- (0,3) node[left]{$x$} arc(180:0:2) -- (rr)
      (l) -- (m) -- node[above]{$b$} (r) -- (6,3) node[right]{$z$}
      (0,-2)node[left]{$x^{RR}$} -- (0,-3) arc(180:360:2) --node[right]{$x^R$} (rr);
   }.
\]
Thus $\cV(\cC(b,z),\cC(a,z))$ becomes a cowedge under $\cC(a-^{RR},-b)$ and the desired map is uniquely determined via $Y_{(z,\gamma)}^{a,b}\pi^{a,b}_x=\tau^{(z,\gamma);a,b}_x$. Using the hexagon equation of $\gamma$, it is straightforward to show that these maps on morphisms preserve composition and thus $Y_{(z,\gamma)}$ is a functor. Moreover, any morphism $(z,\gamma)\to (z',\gamma')$ determines a natural transformation $Y_{(z,\gamma)}\Rightarrow Y_{(z',\gamma')}$ by post composition, and we have a functor $$Y: Z(\cC)\to \coc{\xc}.$$
Note that there is an obvious functor $L:\cC\to\xc$ which is identity on objects and $\pi^{a,b}_\one$ on morphisms:
\begin{align*}
L:\cC &\to \xc
\\
a&\mapsto L(a)=a
\\
L^{a,b}:\cC(a,b)&\xrightarrow{\pi^{a,b}_\one}\int^x\cC(ax^{RR},xb)=\xc(L(a),L(b)).
\end{align*}
It is easy to see that $Y_{(z,\gamma)}L= \cC(-,z)\in \coc{\cC}$ is representable. By the Yoneda lemma, any natural transformation $\beta:Y_{(z,\gamma)}\to Y_{(z',\gamma')}$ uniquely determines a morphism $\tilde\beta:z\to z'$ via $\beta_{L(a)}=\beta_a=\tilde\beta\circ- $ viewed as a natural transformation in $\coc{\cC}$. The condition for $\beta$ to be natural between $Y_{(z,\gamma)}$ and $ Y_{(z',\gamma')}$ in $\coc{\xc}$ reads,
\[ Y_{(z',\gamma')}^{a,b}\pi^{a,b}_x(h)(\tilde \beta \circ f)=\tilde \beta \circ Y_{(z,\gamma)}^{a,b}\pi^{a,b}_x(h)( f).\]
Observe that by taking $h=x\Fr x^Rx^{RR}\xrightarrow{\id\ot\ev}x\Fr $ and $f=\id_\Fr$ (this special choice is also important later)
\begin{align*}
    &Y_{(z,\gamma)}^{x\Fr x^R,\Fr }\pi^{x\Fr x^R,\Fr }_x(x\Fr x^Rx^{RR}\xrightarrow{\id\ot\ev}x\Fr )(\id_{\Fr })
    \\&=\itk{
    \node (l) at (-2,-1) {$\Fr $};
    \node (lu) at (-2,-0.3) {$x$};
    \node (ld) at (-2,-1.7) {$x^R$};
    \node (rr) at (4,2) {};
    \draw 
      (lu) to[out=26.58,in=-90] (0,2) 
      -- (0,3)  arc(180:0:2) -- (rr)
      (0,-2)  to[out=90,in=26.58]  (ld) 
      (l) -- (6,3) node[right]{$z$}
      (0,-2) -- (0,-4) arc(180:360:2) --node[right]{$x^R$} (rr);
   }=\itk{
    \node (l) at (-2,-1) {$\Fr $};
    \node (lu) at (-2,-0.3) {$x$};
    \node (rr) at (4,2) {};
    \draw 
      (lu) to[out=26.58,in=-90] (0,2) 
      -- (0,3)  arc(180:0:2) -- (rr)
      (l) -- (6,3) node[right]{$z$}
      (4,-1)--node[right]{$x^R$} (rr);}
      \\&
      =(xzx^R\xrightarrow{\id\ot\gamma}xx^Rz\xrightarrow{\ev\ot\id}z)\in \cC(x\Fr x^R,\Fr ).
\end{align*}
One immediately checks that $\tilde\beta$ commutes with half-braidings. Therefore, $Y$ is fully faithful. 
\end{proof}

\begin{remark}
     The composition $$Z(\cC)\xrightarrow{Y}\coc{\xc}\xrightarrow{-\circ L} \coc{\cC}$$ is identified with the forgetful functor composed with the Yoneda embedding $$Z(\cC)\to \cC\hookrightarrow\coc{\cC}.$$
     In other words, if we identify $Z(\cC)$ with its image under $Y$ and identify $\cC$ with its image under the Yoneda embedding, the pullback by $L$ (i.e., $-\circ L$) coincides with the forgetful functor.
\end{remark}

We would like to characterize the image of $Y$.
\begin{theorem}\label{thm.imY}
$Z(\cC)$ is equivalent to the full subcategory of $\coc{\xc}$ of objects $F$ such that $FL$ is representable.
\end{theorem}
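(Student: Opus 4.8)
I would deduce this from the preceding theorem by a reconstruction argument. By that theorem $Y\colon Z(\cC)\to\coc{\xc}$ is fully faithful, so $Z(\cC)$ is equivalent to the essential image of $Y$; and since $Y_{(z,\gamma)}L=\cC(-,z)$ is representable — a property invariant under isomorphism of functors — this image lies inside the full subcategory $\cD$ of those $F$ with $FL$ representable. So the whole task is to show that $Y$ is \emph{essentially surjective} onto $\cD$: every $F$ with $FL$ representable is isomorphic to some $Y_{(z,\gamma)}$.

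To this end I would fix $F\in\cD$ and, replacing it by an isomorphic copy, assume $FL=\cC(-,z)$ on the nose, so that $F(a)=\cC(a,z)$ and $F(L(g))=(-\circ g)$. By the universal property of $\xc(a,b)=\int^x\cC(ax,xb)$, the action of $F$ on morphisms is a dinatural family $h\mapsto F(\pi^{a,b}_x(h))\in\cV(\cC(b,z),\cC(a,z))$. Testing functoriality of $F$ against the composites $L(g)\circ\pi^{a,b}_x(h)$ and $\pi^{a,b}_x(h)\circ L(g)$ shows that this data collapses, for each $x$, to a single morphism $\rho_x\colon xzx^L\to z$: those relations say precisely that $h\mapsto F(\pi^{a,z}_x(h))(\id_z)$ is natural in $a$ as a transformation $\cC(-\ot x,xz)\Rightarrow\cC(-,z)$, and since $\cC(-\ot x,xz)\cong\cC(-,xzx^L)$ by rigidity, the Yoneda lemma identifies it with an element $\rho_x\in\cC(xzx^L,z)$, from which the rest of $F$ is recovered by the same naturality. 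I would then define $\gamma_{z,x}\colon zx\to xz$ as the image of $\rho_x$ under the rigidity bijection $\cC(xzx^L,z)\cong\cC(zx^L,x^Rz)$, re-indexed to $\cC(zx,xz)$ along the pivotal isomorphisms $x^L\cong x^R$; the point of this definition is that this $\gamma$, substituted into the formula for $\tau^{(z,\gamma);a,b}_x$ from the preceding proof, reproduces exactly the action of $F$, so that $F\cong Y_{(z,\gamma)}$ as soon as $(z,\gamma)$ is a legitimate object of $Z(\cC)$. Finally I would verify that $(z,\gamma)\in Z(\cC)$: since $F$ is a functor, the family $\tau^{(z,\gamma)}$ respects the composition law $\pi^{b,c}_y(g)\circ\pi^{a,b}_x(f)=\pi^{a,c}_{xy}(\ctr^{a,b,c}_{x,x,y,y}(f\otv g))$ and the dinaturality relations of $\xc$, and unwinding these two facts for the relevant morphisms yields respectively the hexagon identity and the naturality of $\gamma$ in $x$, while the invertibility of each $\gamma_{z,x}$ follows from rigidity of $\cC$ (the candidate inverse is built from the right dual). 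Hence $(z,\gamma)\in Z(\cC)$ and $Y_{(z,\gamma)}\cong F$, so $Y$ is essentially surjective onto $\cD$.

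The main obstacle is the pivotal bookkeeping in the middle step: confirming that the re-indexing defining $\gamma$ really makes $\tau^{(z,\gamma)}$ agree with the action of $F$ — this is where the zig-zag identities and the monoidality of the pivotal structure $\phi$ get used — and then reading the hexagon off of ``$\tau^{(z,\gamma)}$ preserves composition'', which amounts to running the functoriality computation of the preceding proof in reverse. A cleaner but forward-referencing alternative is to invoke the equivalence $\coc{\xc}\cong Z(\coc{\cC})$ (with $\coc{\cC}$ taken under the Day convolution product) established in the following section: under it $\cD$ corresponds to the objects of $Z(\coc{\cC})$ whose underlying presheaf is representable, and since the Yoneda embedding $\cC\hookrightarrow\coc{\cC}$ is strong monoidal it lifts to a fully faithful braided functor $Z(\cC)\to Z(\coc{\cC})$; because a half-braiding on a representable presheaf $\cC(-,z)$ is — by density, cocontinuity of Day convolution in each variable, and the Yoneda lemma — the same datum as a half-braiding on $z$, this identifies that full subcategory with $Z(\cC)$.
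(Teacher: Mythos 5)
Your proposal is correct and follows essentially the same route as the paper: reduce to essential surjectivity onto the subcategory, use representability of $FL$ plus functoriality against the morphisms $L(g)$ to collapse the action of $F$ on $\xc(a,b)$ (via Yoneda and rigidity) to a single morphism $xzx^L\to z$ per $x$, take its mate as the half-braiding, and then derive naturality from dinaturality of $\pi$, the hexagon from the composition law in $\xc$, and invertibility from rigidity together with the unit condition. The forward-referencing alternative via $\coc{\xc}\cong Z(\coc{\cC})$ is also viable but is not the paper's argument here.
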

\begin{proof}
Now suppose $F\in \coc{\xc}$ such that $FL$ is representable, $FL\cong \cC(-,\Fr )$, then one can determine a half-braiding on the object $\Fr \in\cC$ based on the data of $F$. Since $L$ is identity on objects, we know that $F(a)=FL(a)\cong \cC(a,\Fr )$. In the following derivation we will implicitly use this isomorphism, but suppress it in the notation as if $F(a)=\cC(a,\Fr )$.  Note that giving a morphism $F^{a,b}:\int^x\cC(ax^{RR},xb)\to\cV(\cC(b,\Fr ),\cC(a,\Fr ))$ is the same as giving a dinatural transformation $\cC(ax^{RR},xb)\xrightarrow{\pi^{a,b}_x}\int^y\cC(ay^{RR},yb)\xrightarrow{F^{a,b}}\cV(\cC(b,\Fr ),\cC(a,\Fr ))$. To recover the half-braiding, we take the image of
\begin{align*} &(x\Fr x^Rx^{RR}\xrightarrow{\id\ot\ev}x\Fr )=\itk{
    \node (l) at (-2,-1) {$\Fr $};
    \node (lu) at (-2,-0.3) {$x$};
    \node (ld) at (-2,-1.7) {$x^R$};
    \node (r) at (2,1) {$\Fr $};
    \draw 
      (l) -- (r) 
      (lu) to[out=26.58,in=-90] (0,2) node[above]{$x$}
       (0,-3) node[below]{$x^{RR}$} -- (0,-2) to[out=90,in=26.58]  (ld) ;
   }\in \cC(x\Fr x^Rx^{RR},x\Fr )
\\&
\xrightarrow{F^{x\Fr x^R,\Fr }\pi^{x\Fr x^R,\Fr }_x}\cV(\cC(\Fr ,\Fr ),\cC(x\Fr x^R,\Fr ));
\end{align*}
then further pick $\id_{\Fr }\in \cC(\Fr ,\Fr )$, and the mate of
\[F^{x\Fr x^R,\Fr }\pi^{x\Fr x^R,\Fr }_x(x\Fr x^Rx^{RR}\xrightarrow{\id\ot\ev}x\Fr )(\id_{\Fr })\in\cC(x\Fr x^R,\Fr ) \]
in $\cC(\Fr x^R,x^R\Fr )$ will be the desired half-braiding, which we still denote by $\Fb$. More explicitly, we define
\[ \gamma_{z,x^R}:=(zx^R\xrightarrow{\coev\ot \id} x^Rxzx^R\xrightarrow{\id \ot F\pi(x\Fr x^Rx^{RR}\xrightarrow{\id\ot\ev}x\Fr )(\id_{\Fr })}x^Rz).\]
Making use of the functoriality of $F$, 
 i.e., for any $f:ax^{RR}\to xb$ and $g:by^{RR}\to yc$, 
    \[ F^{a,b}\pi_x^{a,b}(f) F^{b,c}\pi^{b,c}_y(g)=F^{a,c}\pi^{a,c}_{xy}(\ctr^{a,b,c}_{x^{RR},x,y^{RR},y} (f\otv g)),\]
together with the fact that $FL\cong \cC(-,\Fr )$ which means for any $g:a\to b$ and $f:b\to \Fr $,
    \[ F^{a,b}\pi^{a,b}_\one(g)(f)=f\circ g,\]
one can compute the composition of $\Fb$ with other morphisms. More explicitly, for any $g: a \to x\Fr x^R$,
\begin{align*}
&F^{x\Fr x^R,\Fr }\pi^{x\Fr x^R,\Fr }_x(\itk{
    \node (l) at (-2,-1) {$\Fr $};
    \node (lu) at (-2,-0.3) {$x$};
    \node (ld) at (-2,-1.7) {$x^R$};
    \node (r) at (2,1) {$\Fr $};
    \draw 
      (l) -- (r) 
      (lu) to[out=26.58,in=-90] (0,2) node[above]{$x$}
      (0,-3) node[below]{$x^{RR}$} -- (0,-2) to[out=90,in=26.58]  (ld) ;
       }
    )(\id_{\Fr }) \circ g
   =F^{a,\Fr }\pi^{a,\Fr }_x(
   \itk{
    \node[rectangle,draw] (l) at (-2,-1) {$g$};
    \node (r) at (2,1) {$\Fr $};
    \draw 
      (l) -- (r) 
      (l.north east) to[out=26.58,in=-90] (0,2) node[above]{$x$}
      (0,-3) node[below]{$x^{RR}$} -- (0,-2) to[out=90,in=26.58]  (l.east) 
      (-4,-2)node[left]{$a$} -- (l);
   }
   )(\id_{\Fr })
   %  \\&
   %  =F^{a,x\Fr x^R}\pi^{a,x\Fr x^R}_\one(g)F^{x\Fr x^R,\Fr }\pi^{x\Fr x^R,\Fr }_x(\itk{
   %  \node (l) at (-2,-1) {$\Fr $};
   %  \node (lu) at (-2,-0.3) {$x$};
   %  \node (ld) at (-2,-1.7) {$x^R$};
   %  \node (r) at (2,1) {$\Fr $};
   %  \node[rectangle,rounded corners,draw] (p) at (0,-2) {$\phi$};
   %  \draw 
   %    (l) -- (r) 
   %    (lu) to[out=26.58,in=-90] (0,2) node[above]{$x$}
   %    (0,-3) node[below]{$x$} -- (p) to[out=90,in=26.58]  (ld) ;
   % })(\id_{\Fr }) 
   .
\end{align*}
With this trick, it becomes straightforward to show that
\begin{itemize}
\item The dinaturality of $\pi$ implies the naturality of $\Fb$. 
\item The hexagon equation of $\Fb$ is satisfied.
\item It is also easy to check that $\Fb_{\Fr ,\one}=\id_{\Fr }$. Together with naturality, hexagon equation and rigidity, it implies that $\Fb$ is invertible.
\end{itemize}
Therefore, $\Fb$ is a half-braiding. One can then check that for the $(z,\gamma)$ constructed from $F$ as above,
\[  Y_{(\Fr ,\Fb)}\cong F,\]
which amounts to verify that $Y_{(z,\gamma)}^{a,b}=F^{a,b}$, i.e., for any $h: ax^{RR}\to xb$ and $f: b\to z$,
\begin{align*}
&\itk{
    \node (l) at (-2,-1) {$a$};
    \node[rectangle,draw] (m) at (0,0) {$h$};
    \node[rectangle,draw] (r) at (2,1) {$f$};
    \node[rectangle,draw] (rr) at (2,4) {$F^{x\Fr x^R,\Fr }\pi^{x\Fr x^R,\Fr }_x(x\Fr x^Rx^{RR}\xrightarrow{\id\ot\ev}x\Fr )(\id_{\Fr })$};
    \draw 
      (0,-2) -- (m) --node[left]{$x$} (rr.south-|m)
      (l) -- (m) -- node[above]{$b$} (r) --node[right]{$z$} (rr.south-|r) 
      (rr) -- (2,6) node[above]{$z$}
      (0,-2) -- (0,-3) arc(180:360:2) --node[right]{$x^R$} (rr.south-|4,0);
   }
   =F^{a,z}\pi^{a,z}_x(\itk{
    \node (l) at (-2,-1) {$a$};
    \node[rectangle,draw] (m) at (0,0) {$h$};
    \node[rectangle,draw] (r) at (2,1) {$f$};
    \draw 
      (0,-2) -- (m) -- (0,2)node[above]{$x$}
      (l) -- (m) -- node[above]{$b$} (r) --node[above]{$z$}(4,2)
      (0,-2)node[below]{$x^{RR}$} -- (m);
   })(\id_\Fr)
   \\&
   = F^{a,b}\pi^{a,b}_x(\itk{
    \node (l) at (-2,-1) {$a$};
    \node[rectangle,draw] (m) at (0,0) {$h$};
    \draw 
      (0,-2) -- (m) -- (0,2)node[above]{$x$}
      (l) -- (m) -- node[above]{$b$} (2,1)
      (0,-2)node[below]{$x^{RR}$} -- (m);
   }
   )F^{b,z}\pi^{b,z}_\one(f)(\id_\Fr)
   =F^{a,b}\pi^{a,b}_x(h)(f).
\end{align*}
The above proves that $Y$ is essentially surjective on the full subcategory of $\coc{\xc}$ of objects $F$ such that $FL$ is representable.
\end{proof}

\begin{corollary}
   If any functor in $\coc{\cC}$ is representable, i.e., the Yoneda embedding $\cC\hookrightarrow\coc{\cC}$ is an equivalence (e.g. when $\cC$ is finite semisimple), then $${Y:Z(\cC)\to \coc{\xc}}$$ is also an equivalence.
\end{corollary}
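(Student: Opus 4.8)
The plan is to obtain this corollary directly from Theorem~\ref{thm.imY}, with essentially no new work. Recall what that theorem and its proof give us: the functor $Y$ is fully faithful, and it corestricts to an equivalence of $\cV$-categories between $Z(\cC)$ and the full subcategory of $\coc{\xc}$ consisting of those presheaves $F$ for which the restriction $FL\in\coc{\cC}$ is representable. Thus the entire task of identifying $Z(\cC)$ inside $\coc{\xc}$ is already completed, and the only question left is when that full subcategory exhausts all of $\coc{\xc}$.

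The single step that remains is the observation that the hypothesis of the corollary trivializes the condition on $FL$. For any object $F\in\coc{\xc}$ the composite $FL$ is, by construction, an object of $\coc{\cC}$, hence representable by hypothesis; so every $F$ belongs to the full subcategory singled out in Theorem~\ref{thm.imY}, which is therefore all of $\coc{\xc}$. Combining this with the full faithfulness of $Y$ — a fully faithful and essentially surjective $\cV$-functor is an equivalence — we conclude that $Y\colon Z(\cC)\to\coc{\xc}$ is an equivalence of $\cV$-categories. If one additionally wants the braided monoidal form, one transports the braided monoidal structure of $Z(\cC)$ through $Y$ (equivalently, one composes the equivalence with the identifications $\coc{\xc}\cong Z(\coc{\cC})$ and $\coc{\cC}\simeq\cC$), but this is not needed for the bare equivalence asserted here.

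I do not anticipate a genuine obstacle: all the substance sits in Theorem~\ref{thm.imY}, and the corollary is a one-line specialization. The one place I would be careful is the parenthetical example, namely that the hypothesis holds when $\cC$ is finite semisimple, i.e. that the Yoneda embedding $\cC\hookrightarrow\coc{\cC}$ is already an equivalence for a fusion category. I would argue this in the standard way: a semisimple $\cC$ has finitely many simple objects $X_i$, and every presheaf $F$ decomposes as $\bigoplus_i \cC(-,X_i)\ot F(X_i)$; since $\cC$ admits finite biproducts, this (finite) sum is again representable — by an appropriate biproduct of copies of the $X_i$ — provided each $F(X_i)$ is finite-dimensional. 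I would state this finiteness convention explicitly (for instance, working with $\cV$ the category of finite-dimensional vector spaces, or restricting $\coc{\cC}$ to presheaves with finite-dimensional values), since for $\cV$ literally the category of all vector spaces the unqualified assertion fails: there are then non-representable presheaves on a fusion category.
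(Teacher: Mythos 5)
Your argument is exactly the intended one: the paper states this as an immediate consequence of Theorem~\ref{thm.imY}, which identifies the essential image of the fully faithful $Y$ as the presheaves $F$ with $FL$ representable, a condition that becomes vacuous under the hypothesis. Your added caveat about the parenthetical example (that representability of all presheaves on a finite semisimple $\cC$ requires a finite-dimensionality convention on the values in $\cV$) is a fair and worthwhile precision that the paper leaves implicit.
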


\begin{remark}
If $\cC$ is finite semisimple, one has isomorphisms natural in $a,b$
\[ \xc(a,b)=\int^x \cC(ax^{RR},xb)\cong \bigoplus_{x\in\mathrm{Irr}(\cC)} \cC(ax^{RR},xb)\cong \int_x\cC(ax^{RR},xb),\]
where $\mathrm{Irr}(\cC)$ denotes the finite set of (representatives of isomorphism classes of) simple objects in $\cC$. This recovers the tube category in the literature. To recover the tube algebra, one can just take $P=\bigoplus_{x\in \mathrm{Irr}(\cC)}x$ and consider the algebra $\xc(P,P)$.
For a spherical fusion category $\cC$, the idempotent completion of $\xc$ is also equivalent to either $Z(\cC)$ or $\coc{\xc}$ (see e.g.~\cite{Har1911.07271}). Note that an idempotent in $\xc$ is just a local tensor which is invariant under tensor contraction and renormalization; it corresponds to an object in $Z(\cC)$, and was called a (steady) quantum current in \cite{LZ2305.12917}. 
\end{remark}
\begin{remark}
Let $\tilde L$ be the composition $\cC\xrightarrow{L} \xc \hookrightarrow \coc{\xc}$.  If $\tilde L$ lies in the image of $Y$, we think $Y^{-1}\tilde L$ as a functor $\cC\to Z(\cC)$, and by a standard derivation using the Yoneda lemma,
\[\coc\xc(\xc(-,L(a)),F)\cong FL(a)\cong \coc\cC(\cC(-,a),FL),\]one can see that $Y^{-1}\tilde L$ is left adjoint to the forgetful functor $Z(\cC)\to \cC$, which explains the choice of our notation. The condition for $\tilde L$ to fall in the image of $Y$ translates to that $\tilde L(a) L=\xc(-,a):\cC^\op\to\cV$ is representable for any $a\in \cC$. It is the case when the hom functor of $\cC$ commutes with coends (when every object in $\cC$ is projective, e.g., when $\cC$ is finite semisimple) since $\tilde L(a) L=\xc(-,a)= \int^x \cC(-x^{RR},xa)\cong \int^x\cC(-,xax^R)\cong \cC(-,\int^x xax^R).$ By Theorem~\ref{thm.imY}, we know  that $\int^x xax^R$ admits a natural half-braiding. Indeed, the functor $a\mapsto \int^x xax^R$, coinciding with $Y^{-1}\tilde L$, is also left adjoint to the forgetful functor $Z(\cC)\to \cC$.
\end{remark}

% \begin{remark}
%     In general, $\coc{\xc}$, the complete space of fixed-point symmetric tensors, can be larger than the Drinfeld center $Z(\cC)$. Recently, the Drinfeld center $Z(\cC)$ has been realized as an important, if not essential, ingredient in the macroscopic description of physical systems with symmetry, though under various names in the literature: enriched category~\cite{KZ1705.01087,CJK+1903.12334,KZ1905.04924,KZ1912.01760,XZ2205.09656,LY2208.01572}, topological skeleton~\cite{KZ2011.02859,KZ2107.03858,KZ2201.05726}, categorical symmetry~\cite{JW1905.13279,JW1912.13492,KLW+2003.08898,KLW+2005.14178,JW2106.02069}, symmetry TFT~\cite{ABE+2112.02092,BS2304.02660,BS2305.17159} or symmetry TO~\cite{CW2203.03596,CW2205.06244}. We hope to find a concrete example where $Y$ is not an equivalence, and study what extra data in $\coc{\xc}$ are not covered by $Z(\cC)$.
% \end{remark}

% \appendix
% \section{Some title}
% Please always give a title also for appendices.

\section{The equivalence with the relative center of the Yoneda embedding}\label{sec.equ}

We first review the Yoneda Lemma and co-Yoneda Lemma in terms of end and coend. Let $\cC$ be a $\cV$-category and $G:\cC^\op\to \cV$ a $\cV$-functor. The Yoneda Lemma asserts isomorphisms 
\[ G(a)\cong \coc{\cC}(\cC(-,a),G)\cong \int_x \cV(\cC(x,a), G(x))%\cong \int_x \cC(x,a)^*\otv G(x)
\]
natural in $a\in \cC$%, where $\cC(x,a)^*$ denotes the linear dual of $\cC(x,a)$
. Similarly, the co-Yenoda Lemma asserts the natural isomorphism
\[ G(a)\cong \int^x \cC(a,x)\otv G(x).\]
One can see that in the integral notation, the hom functor $\cC(a,x)$ behaves like a ``delta functor''; for any natural transformation $\beta:G\Rightarrow G'$, we also have
\[ \beta_a=\left(G(a)\cong \int^x \cC(a,x)\otv G(x) \xrightarrow{\int^x \cC(a,\id_x)\otv \beta_x} \int^x \cC(a,x)\otv G'(x)\cong G'(a)\right).\]

Now assume that $\cC$ is monoidal. It is possible to equip $\coc{\cC}$ with a monoidal structure such that the Yoneda embedding $\cC\hookrightarrow\coc\cC$ is monoidal. Such monoidal structure is known as the Day convolution product, defined for $H,G:\cC^\op\to \cV$,
\begin{align*}
 H\star G:&=\int^{x,y} \cC(-,xy)\otv H(x)\otv G(y).
\end{align*}
Assume in addition that $\cC$ is rigid, we further have
\[ H\star G\cong \int^x H(-x^R)\otv G(x)\cong \int^x H(x)\otv G(x^L-).\]

It is then meaningful to talk about the Drinfeld center $Z(\coc\cC)$ and the relative center $Z(\cC\hookrightarrow\coc{\cC})$. An object $(F,\tau_{F,-})$ in the Drinfeld center $Z(\coc\cC)$ is a functor $F:\cC^\op\to\cV$ together with half-braiding $\tau_{F,G}:F\star G\cong G\star F$ whose component on $a\in \cC$ is
\[(\tau_{F,G})_a:\int^x F(ax^R)\otv G(x)\to \int^x F(x^La)\otv G(x).\]
An object in the relative center $Z(\cC\hookrightarrow\coc{\cC})$ is a functor $H:\cC^\op\to \cV$ equipped with a half-braiding $\gamma_{H,-}$ with respect to all representable functors, i.e.
\[(\gamma_{H,b})_a:
\int^x H(ax^R)\otv \cC(x,b)\cong H(ab^R)\to \int^x H(x^La)\otv \cC(x,b)\cong H(b^L a).\]
We list the hexagon equation in this case explicitly:
\begin{align*}
&\bigg(H(a(bc)^R)\xrightarrow{(\gamma_{H,bc})_a}H((bc)^La)\bigg)
\\&=\bigg(H(ac^Rb^R)\xrightarrow{(\gamma_{H,b})_{ac^R}}H(b^Lac^R)\xrightarrow{(\gamma_{H,c})_{b^La}}H(c^Lb^La)\bigg).
\end{align*}
Moreover, $\gamma_{H,-}$ can be extended to be with respect to all functors in $\coc{\cC}$, by defining 
\[ (\tilde\gamma_{H,G})_a:=\left(\int^x H(ax^R)\otv G(x)\xrightarrow{\int^x (\gamma_{H,x})_a\otv G(\id_x)} \int^x H(x^La)\otv G(x)\right).\]
This way, $(H,\tilde \gamma_{H,-})$ is an object in $Z(\coc\cC)$. Such assignment also defines a canonical braiding on the relative center $Z(\cC\hookrightarrow\coc\cC)$, and we have a braided embedding
\[ Z(\cC\hookrightarrow\coc\cC)\hookrightarrow Z(\coc\cC).\]

We are then ready to prove
\begin{theorem}
    $\coc\xc$ is equivalent to $Z(\cC\hookrightarrow\coc\cC)$.
\end{theorem}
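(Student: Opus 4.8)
The plan is to build an explicit equivalence $\coc{\xc}\simeq Z(\coc\cC)$ by unpacking what an object of each side is and matching the data. An object of $Z(\coc\cC)$ is a presheaf $G:\cC^\op\to\cV$ together with a half-braiding $\gamma_{G,H}:G\star H\xrightarrow{\sim}H\star G$ natural in $H\in\coc\cC$; since $\star$ is cocontinuous in each variable and every presheaf is a colimit of representables by the co-Yoneda Lemma, such a half-braiding is determined by its components on representables $\gamma_{G,\cC(-,x)}$, i.e. by a family of isomorphisms $G\star\cC(-,x)\xrightarrow{\sim}\cC(-,x)\star G$ natural and ``dinatural enough'' in $x$. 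Using the pivotal simplifications $G\star\cC(-,x)\cong G(-x^R)$ and $\cC(-,x)\star G\cong G(x^L-)$ recorded just before the theorem, a half-braiding on $G$ amounts to a natural family $G(-x^R)\xrightarrow{\sim}G(x^L-)$; after absorbing duals this is exactly the kind of dinatural datum that the coend $\int^x\cC(ax,xb)$ is designed to corepresent. So the first step is to make this bookkeeping precise: show that $Z(\coc\cC)$ is the category of presheaves $G$ equipped with such a dinatural isomorphism, with the hexagon translating into the composition/associativity constraint.

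The second step is to produce the functor in each direction. Given $F\in\coc\xc$, I would define a presheaf $\bar F:\cC^\op\to\cV$ by precomposition with $L:\cC\to\xc$, i.e. $\bar F:=F\circ L^\op$, so $\bar F(a)=F(a)$; then the action of $F$ on the tube morphisms $\pi^{a,b}_x(h)$, for $h:ax\to xb$, should assemble — via the co-Yoneda presentation of $\bar F$ and the pivotal identities above — into the half-braiding data on $\bar F$. Concretely, $F$ applied to the universal element of $\int^x\cC(ax,xb)$ gives, for each $x$, a map $F(b)\to\cV(\cC(\cdot,?),\cdot)$-type datum that after Yoneda becomes a component $\bar F(\cdot\, x^R)\to\bar F(x^L\cdot)$, and dinaturality of $\pi$ is precisely naturality/dinaturality in $x$; functoriality of $F$ (compatibility with $\ctr$) gives the hexagon. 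Conversely, given $(G,\gamma)\in Z(\coc\cC)$ I would define $\Phi_{(G,\gamma)}:\xc^\op\to\cV$ by $\Phi_{(G,\gamma)}(a):=G(a)$ on objects, and on morphisms send $\pi^{a,b}_x(h)$ to the composite built from $\gamma$, the pivotal structure, and $G(h):G(xb)\to G(ax)$ — the same combinatorial shape as the map $\tau$ in the proof of the embedding theorem, but now with $G$ an arbitrary presheaf rather than a representable $\cC(-,z)$. One checks this is a well-defined cowedge (so descends through $\pi^{a,b}_x$ to a map out of the coend) and that it is functorial using the hexagon for $\gamma$; this is the natural generalization of the functor $Y$ to all of $\coc\cC$.

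The third step is to check the two constructions are mutually inverse and that the equivalence is monoidal and braided, matching the Day convolution and the induced braiding on $Z(\coc\cC)$ with the composition structure on $\coc\xc$. The inverse-equivalence check is essentially formal once the dictionary of Step 1 is in place: applying $\bar{(-)}$ to $\Phi_{(G,\gamma)}$ recovers $(G,\gamma)$ because $\Phi$ was defined so that its restriction along $L$ is $G$ and its half-braiding datum is read off as $\gamma$; the other composite is handled by the co-Yoneda Lemma, since a functor out of $\xc$ is determined by its values on objects together with the cowedge data it induces on each $\int^x\cC(ax,xb)$, and both are preserved. Compatibility with $\star$ versus composition in $\xc$ comes from comparing the two coend formulas $G\star H\cong\int^x G(-x^R)\otv H(x)$ and $\xc(a,c)$ built from $\ctr^{a,b,c}$; the braiding on $Z(\coc\cC)$ is $\gamma$ by definition, which is exactly the datum we transported.

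The main obstacle I expect is Step 1 together with the dinaturality bookkeeping in Step 2: precisely, verifying that a half-braiding on a presheaf $G$ — a priori a natural family indexed by \emph{all} objects $H$ of $\coc\cC$ subject to the hexagon — is equivalent to a single dinatural transformation in the coend sense, and that the hexagon for $\gamma$ corresponds on the nose to the compatibility of $F$ with $\ctr^{a,b,c}$ (i.e. to functoriality of $\Phi_{(G,\gamma)}$). The pivotal identities $G\star\cC(-,x)\cong G(-x^R)$ and $\cC(-,x)\star G\cong G(x^L-)$ do most of the work, but one must be careful that the isomorphisms are natural in $x$ in the way the coend requires, and that the associativity constraint of $\star$ (hidden in the hexagon) becomes exactly the cowedge condition defining composition in $\xc$; keeping track of the pivotal structure $\phi$ and the left/right dual conventions throughout is where the real care is needed. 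The rest — essential surjectivity, full faithfulness, monoidality — should then follow by the same co-Yoneda and universal-property arguments already used for the embedding theorem.
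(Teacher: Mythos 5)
Your proposal follows essentially the same route as the paper: one direction restricts along $L$ and reads the half-braiding off the action of $F$ on the tube morphisms via the pivotal identities $G\star\cC(-,x)\cong G(-x^R)$ and $\cC(-,x)\star G\cong G(x^L-)$, while the other direction rebuilds the $\xc$-functor from the half-braiding components on representables, exactly as in the paper's constructions of $\gamma_{F,-}$ and $\hat H$. Your Step 1 — that a half-braiding on a presheaf is determined by its (suitably natural) components on representables, via cocontinuity of Day convolution and the co-Yoneda Lemma — is a point the paper leaves implicit in its ``the two constructions are inverse to each other'' step, and making it explicit is a correct and worthwhile refinement rather than a different approach.
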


\begin{proof}

Now consider $F:\xc^\op\to\cV$. $FL$ is then a functor in $\coc{\cC}$ which is identical to $F$ on objects, but on morphisms $FL$ remembers only the action of $F$ on the image of $\pi_\one$ and forgets all others. We will show that the ``extra data'' in $F$ about how it maps morphisms in $\xc(a,b)$, equips $FL$ with a half-braiding with respect to representable functors under the Day convolution product.

For any $x,a\in \cC$, we construct
\[ (\gamma_{F,x})_a=F^{x^La,ax^R}\pi^{x^La,ax^R}_{x^L}(x^Lax^R\xrightarrow{\id}x^Lax^R): FL(ax^R) \to FL(x^La). \]
Using the functoriality of $F$ together with the dinaturality of $\pi$, it is straightforward to verify that $(\gamma_{F,x})_a$ is natural in $a,x$ and the inverse to $(\gamma_{F,x})_a$ is
\[ 
 F^{ax^R,x^La}\pi^{ax^R,x^La}_{x}(ax^Rx^{RR}\xrightarrow{\id\ot\ev}a\xrightarrow{\coev\ot\id}xx^La).
\]
The hexagon equation for $\gamma_{F,-}$ follows directly from the functoriality of $F$.
Since the half-braiding $\gamma_{F,-}$ is defined using the action of $F$ on morphisms in $\xc$, natural transformations in $\coc\xc$ automatically commute with half-braidings and the assignment
\begin{align*}
    \coc{\xc}&\to Z(\cC\hookrightarrow\coc{\cC})
    \\
    F &\mapsto (FL,\gamma_{F,-})
\end{align*}
is a functor.

Conversely, suppose $(H,\gamma_{H,-})$ is an object in $Z(\cC\hookrightarrow\coc\cC)$.
We use $\gamma_{H,-}$ to lift $H$ to a functor $\hat H$ in $\coc\xc$. Define $\hat H(a)=H(a)$; for $h:ax^{RR}\to xb$, define maps dinatural in $x$ by
\begin{align*}
h&\mapsto 
\bigg( H(b)\xrightarrow{H(x^Lax^{RR}\xrightarrow{\id\ot h}x^Lxb\xrightarrow{\ev\ot \id}b)} H(x^Lax^{RR})
\\&
\xrightarrow{(\gamma_{H,x^R})_{x^La}}H(xx^La)\xrightarrow{H(a\xrightarrow{\coev\ot\id}xx^La)}H(a)\bigg),
% \bigg( H(b)\xrightarrow{H(x^Lax\xrightarrow{\id\ot h}x^Lxb\xrightarrow{\ev\ot \id}b)} H(x^Lax)
% \\&
% \xrightarrow{(\gamma_{H,\cC(-,x)})^{-1}_{ax}}H(axx^R)\xrightarrow{H(a\xrightarrow{\id\ot\coev}axx^L\xrightarrow{\id\ot\phi}axx^R)} H(a)\bigg)
% \\&
% =\bigg(H(b)\xrightarrow{H(x^Lxb\xrightarrow{\ev\ot\id}b)}H(x^Lxb)\xrightarrow{(\gamma_{H,\cC(-,x)})_{xb}^{-1}}H(xbx^R)
% \\&
% \xrightarrow{H(a\xrightarrow{\id\ot\coev}axx^L\xrightarrow{\id\ot\phi}axx^R\xrightarrow{h\ot \id}xbx^R)}H(a)\bigg),
\end{align*}
whose dinaturality follows from the naturality of $(\gamma_{H,b})_a$ in both $a,b$. 
Then \[ \hat H^{a,b}:\xc(a,b)\to \cV(\hat H(b),\hat H(a)),\]
is uniquely determined. Functoriality of $\hat H$ follows from the functoriality of $H$, the hexagon equation, and the naturality of $\gamma$;
this fact is the most challenging to check in this proof and we give more details. Take two arbitrary morphisms $f: y^L b y^{RR}\to c$ and $g: x^L a x^{RR}\to b$, the functoriality of $\hat H$ amounts to the following commutative diagram:
\[
\begin{tikzcd}
H(c) \arrow[rd, "H(f\circ(y^L g y^{RR}))"'] \arrow[r, "H(f)"] & H(y^L b y^{RR}) \arrow[rr, "{(\gamma_{H,y^R})_{y^Lb}}"] \arrow[d, "H(y^Lgy^{RR})"]                                         &  & H(yy^Lb) \arrow[rr, "H(\coev_y b)"] \arrow[d, "H(yy^Lg)"]                                          &  & H(b) \arrow[d, "H(g)"']                                \\
                                                              & H(y^Lx^Lax^{RR}y^{RR}) \arrow[rr, "{(\gamma_{H,y^R})_{y^Lx^Lax^{RR}}}"] \arrow[rrdd, "{(\gamma_{H,y^Rx^R})_{y^Lx^La}}"'] &  & H(yy^Lx^Lax^{RR}) \arrow[rr, "H(\coev_y x^Lax^{RR})"] \arrow[dd, "{(\gamma_{H,x^R})_{yy^Lx^La}}"] &  & H(x^Lax^{RR}) \arrow[dd, "{(\gamma_{H,x^R})_{x^La}}"'] \\
                                                              &                                                                                                                            &  &                                                                                                    &  &                                                        \\
                                                              &                                                                                                                            &  & H(xyy^Lx^La) \arrow[rrd, "H(\coev_{xy} a)"'] \arrow[rr, "H(x\,\coev_y x^L a)"]                       &  & H(xx^La) \arrow[d, "H(\coev_x a)"']                    \\
                                                              &                                                                                                                            &  &                                                                                                    &  & H(a)                                                  
\end{tikzcd}
\]
To save space, we used shorthand notations: $y^Lgy^{RR}$ stands for $\id_{y^L}\ot g\ot \id_{y^{RR}}$ and similar for others.
Natural transformations in $\coc\cC$ commuting with $\gamma$ are automatically natural transformations in $\coc\xc$. The assignment 
\begin{align*}
Z(\cC\hookrightarrow\coc\cC)&\to \coc\xc\\
(H,\gamma_{H,-})&\mapsto \hat H
\end{align*}
is then a functor.

It is straightforward to check that the above two constructions are inverse to each other, and thus $\coc\xc$ is equivalent to $Z(\cC\hookrightarrow\coc\cC)$.
\end{proof}
\begin{corollary}
    Since $Z(\cC\hookrightarrow\coc\cC)$ has a canonical braided monoidal structure, so is $\coc\xc$. The composition $Z(\cC)\xrightarrow{Y} \coc\xc\cong Z(\cC\hookrightarrow\coc\cC)\hookrightarrow Z(\coc\cC)$ is a braided fully faithful functor. It is clear that $Z(\cC)\xrightarrow[]{Y}\coc\xc\cong Z(\cC\hookrightarrow\coc\cC)$ coincides with the natural inclusion $Z(\cC)\hookrightarrow Z(\cC\hookrightarrow\coc\cC)$, i.e., restricting to the full subcategory of $Z(\cC\hookrightarrow\coc\cC)$ on objects $(H,\gamma_{H,-})$ where $H$ is representable. Pullback by $L$ manifestly coincides with the forgetful functor 
    \begin{align*}
    &\left(Z(\cC\hookrightarrow\coc\cC)\to\coc\cC\right) \\&=\left(Z(\cC\hookrightarrow\coc\cC)\cong \coc\xc \xrightarrow{-\circ L} \coc\cC\right).
    \end{align*}
\end{corollary}

\begin{remark}
    We believe that the monoidal structure of $\coc\xc$ physically correspond to the composition of MPOs, i.e., tensor contraction along the time direction, although the physical bonds are not explicit in $\xc$ (they are formally ``integrated out'').
\end{remark}

\begin{remark}
    Note that the Drinfeld center and relative center can be equivalently defined by the bimodule functors
    \[ Z(\cC)\cong \Fun_{\cC|\cC}(\cC,\cC),\quad Z(F:\cC\to \cD)\cong \Fun_{\cC|\cC}(\cC,\cD).\]
    Indeed, our approach in this paper can be readily generalized to module functors. Let $\cA$ be a monoidal $\cV$-category and $\cM,\cN$ left module $\cV$-categories over $\cA$; we can define a $\cV$-category $\hc{\cA}{\cM}{\cN}$ in a similar way as the tube category $\xc$:
    \begin{itemize}
        \item Objects are $\Ob(\hc{\cA}{\cM}{\cN})=\Ob(\cN)\times \Ob(\cM)$;
        \item Morphisms are defined by the coend: $$\hc{\cA}{\cM}{\cN}((r,x),(s,y)):=\int^{a\in \cA} \cN(r,as)\otv \cM(ay,x);$$
        \item Composition and identity are defined similarly as the tube category.
    \end{itemize}
    Such category $\hc{\cA}{\cM}{\cN}$ might be called the \emph{boundary tube category}, since it is the horizontal categorification of the weak Hopf algebra introduced in \cite{KK1104.5047} whose representations correspond to excitations on gapped boundaries of Levin-Wen models.
    We can similarly prove that
    \[ \coc{(\hc{\cA}{\cM}{\cN})}\cong \Fun^\text{lax}_\cA(\cM,\coc\cN),\]
    where $\Fun^\text{lax}_\cA$ denotes the category of lax left $\cA$-module $\cV$-functors (see, e.g.~\cite{KYZZ2104.03121,DSS1406.4204}), and the left $\cA$-module structure on $\coc\cN$ is similarly defined by the Day convolution. The details will be laid out in our future work. Note that when $\cA$ has left duals, lax left $\cA$-module functors are automatically strong left $\cA$-module functors~\cite[Lemma 2.10 in arXiv v3]{DSS1406.4204}. For a rigid monoidal $\cV$-category $\cC$, take $\cA=\cC\otv \cC^\rev$ ($\cC^\rev$ denotes the same category $\cC$ equipped with reversed tensor product) and $\cM=\cN=\cC$, we have
    \[ \coc{(\hc{\cC\otv\cC^\rev}{\cC}{\cC})}\cong \Fun_{\cC|\cC}(\cC,\coc\cC)\cong Z(\cC\hookrightarrow\coc\cC)\cong\coc\xc.\] 
    $\hc{\cC\otv\cC^\rev}{\cC}{\cC}$ is the tube category on the boundary $\cC$ of $\cC\otv \cC^\rev$, equivalently on the trivial domain wall $\cC$ between $\cC$ and $\cC$ itself by the folding trick. We may also view $\xc$ as the tube category with a single leg in or out (an object in $\xc$ is an object in $\cC$), and $\hc{\cC\otv\cC^\rev}{\cC}{\cC}$ is then the tube category with a double leg in or out (an object in $\hc{\cC\otv\cC^\rev}{\cC}{\cC}$ is a pair of objects in $\cC$). $\hc{\cC\otv\cC^\rev}{\cC}{\cC}$ and $\xc$ are Morita equivalent in the sense that their categories of representations are equivalent.
\end{remark}

\acknowledgments
TL is grateful to Tin Hay (Giovanni) Leung, Xiao-Gang Wen and Zhengcheng Gu for discussions. This work is supported by start-up funding from The Chinese University of Hong Kong, and by funding from Research Grants Council, University Grants Committee of Hong Kong (ECS No.~24304722).

% This is the most common positions for acknowledgments. A macro is
% available to maintain the same layout and spelling of the heading.

% \paragraph{Note added.} This is also a good position for notes added
% after the paper has been written.

% Bibliography

%% [A] Recommended: using JHEP.bst file
\bibliographystyle{JHEP}
\bibliography{biblio.bib}
% \bibliography{../../ref/all}

%% or
%% [B] Manual formatting (see below)
%% (i) We suggest to always provide author, title and journal data or doi:
%% in short all the informations that clearly identify a document.
%% (ii) please avoid comments such as "For a review'', "For some examples",
%% "and references therein" or move them in the text. In general, please leave only references in the bibliography and move all
%% accessory text in footnotes.
%% (iii) Also, please have only one work for each \bibitem.

\end{document}